\newcommand{\bdsb}[1]{\boldsymbol{#1}}
\newtheorem{lem}{Lemma}
\newtheorem{corollary}{Corollary}
\newtheorem{thm}{Theorem}
\newtheorem{condition}{Condition}
\title{Robust Decentralized State Estimation and Tracking for Power Systems via Network Gossiping}
\author{Xiao Li, {\it Student Member, IEEE} and Anna Scaglione, {\it Fellow, IEEE}
\thanks{This work was supported by the U.S. Department of Energy through the Trustworthy Cyber Infrastructure for Power Grid (TCIPG) program.}
\thanks{%
  The authors are with the Dept. of Electrical and Computer Engineering,
  University of California, Davis.
  (email : \{eceli,ascaglione\}@ucdavis.edu).}
\thanks{
Part of this work was presented in \cite{li2012decentralized} at IEEE SAM 2012 and in \cite{li2013robust} at IEEE ICASSP 2013.
}
}
\begin{document}

\maketitle

\begin{abstract}
This paper proposes a fully decentralized adaptive re-weighted state estimation (DARSE) scheme for power systems via network gossiping. The enabling technique is the proposed Gossip-based Gauss-Newton (GGN) algorithm, which allows to harness the computation capability of each area (i.e. a database server that accrues data from local sensors) to collaboratively solve for an accurate global state. The DARSE scheme mitigates the influence of bad data by updating their error variances online and re-weighting their contributions adaptively for state estimation. Thus, the global state can be estimated and tracked robustly using near-neighbor communications in each area. Compared to other distributed state estimation techniques, our communication model is flexible with respect to reconfigurations and resilient to random failures as long as the communication network is connected. Furthermore, we prove that the Jacobian of the power flow equations satisfies the Lipschitz condition that is essential for the GGN algorithm to converge to the desired solution. Simulations of the IEEE-118 system show that the DARSE scheme can estimate and track online the global power system state accurately, and degrades gracefully when there are random failures and bad data.
\end{abstract}

\begin{keywords}
hybrid state estimation, convergence, gossiping
\end{keywords}


\section{Introduction}
Traditionally, power system state estimation (PSSE) has been solved by the iterative Gauss-Newton (GN) procedure (or its variants) using measurements from Supervisory Control and Data Acquisition (SCADA) systems \cite{schweppe1974static}. Recently, Phasor Measurement Units (PMU) in the Wide-Area Measurement Systems (WAMS) are gaining increasing attention since state estimation using PMU data becomes a linear least squares problem \cite{yang2011transition}. However, due to the limited deployment of PMUs, researchers have proposed hybrid estimation schemes \cite{phadke1986state,zivanovic1996implementation} by integrating WAMS and SCADA measurements. Some of these methods incorporate the PMU measurements into the GN update \cite{meliopoulos2010supercalibrator,qin2007hybrid,nuqui2007hybrid}, while others use PMU data to further refine the estimates from SCADA data \cite{chakrabarti2010comparative,avila2009recent}.

\vspace{-0.3cm}
\subsection{Related Work}
Driven by the ongoing expansion of the cyber infrastructure of power systems and fast sampling rates of PMU devices, there has been growing concern on distributing the computations across different areas by fusing information in various ways \cite{brice1982multiprocessor,kurzyn1983real,yang2011transition,gomez2011multilevel,falcao1995parallel,lin1992distributed,ebrahimian2000state,van1981two,zhao2005multi,jiang2007distributed}. In most of these methods, each distributed area solves for a local state and refines the local estimates in a hierarchical manner by leveraging on the tie-line structure and tuning the estimates on boundary buses shared with neighboring areas. Although these methods considerably alleviate the computational burdens at control centers, they rely on {\it aggregation trees} that require centralized coordination and depend on the power grid topology. This is clearly a limitation in reconfiguring the system, if random failures or attacks call for it. Last but not least, this class of methods typically requires local observability provided by redundant local measurements, which may not be satisfied during contingencies.

Recently, the authors of \cite{xie2012fully,kekatos2012distributed} proposed distributed state estimation schemes for power systems that do not require local observability. Specifically,  \cite{kekatos2012distributed} follows a similar formulation as in \cite{falcao1995parallel,lin1992distributed,ebrahimian2000state} and uses the {\it alternating direction method of multipliers} (AD-MoM) to distribute the state estimation procedure. The merit of AD-MoM in \cite{kekatos2012distributed} lies in the fact that the computation is local and scalable with respect to the number of variables in each area. However, the communications required by the AD-MoM scheme are constrained by the grid topology. 
Also, the numerical tests performed in \cite{kekatos2012distributed} are based exclusively on PMU data and the algorithm convergence in the presence of SCADA measurements is not discussed in general.
While the advantages of hybrid state estimation schemes are evident from \cite{phadke1986state,zivanovic1996implementation}, these papers  do not provide analytical proof nor performance guarantee for the convergence.

In order to obtain the global state, the approach taken in \cite{xie2012fully} is inspired by a number of recent advances in network diffusion algorithms for optimization. Diffusion algorithms are capable of solving an array of problems  in a fully decentralized manner without any hierarchical aggregation, including linear filtering \cite{lopes2008diffusion}, convex optimizations \cite{nedic2009distributed} and adaptive estimation \cite{kar2008distributed}. These techniques combine a local descent step with a diffusion step, which is performed via {\it network gossiping}. The convergence of these algorithms relies on the convexity of the cost function and a small (or diminishing) step-size, which slows down the algorithm in general. Furthermore, PSSE with SCADA measurements is non-convex and it is not clear how these methods will perform in practice. 

Compared to other decentralized methods including \cite{li2012decentralized,li2013robust}, another major issue addressed in this paper is bad data processing. There has been extensive work devoted to the detection and identification of bad data in power systems, mainly divided into two categories. The first category is usually handled by $\chi^2$-test and the largest normalized residual (LNR) test \cite{van1985bad,garcia1979fast,bobba2010detecting,liu2011false,schweppe1974static}. There are also specific detection schemes using PMU measurements in different ways \cite{chen2006placement,giani2011smart}. To reduce the bad data effects, LNR tests are performed successively, which re-estimates the state after removing or compensating the data entry with the largest residual \cite{ebrahimian2000state,monticelli1999state}. Based on the LNR principle, \cite{choi2011fully} developed a distributed scheme where different areas exchange measurement residuals and successively re-estimate the state, until no further alarms are triggered. To avoid repetitive computations, \cite{bin1994implementable,pasqualetti2011distributed} suggested computing $\chi^2$ and LNR test statistics as a rule-of-thumb to identify bad data in one pass, where the state only needs to be re-estimated one additional time. Furthermore, the work \cite{kekatos2012distributed,xu2011sparse} proposed a convex optimization approach to directly estimate the bad data jointly with the state variables by integrating a sparsity constraint on the number of error outliers. The second approach for bad data processing, on the other hand, is to suppress their effects on the state estimates instead of removal.  For instance, \cite{merrill1971bad, falcao1982power} propose incorporating different weights for the residuals to limit the impacts of bad data on the state updates. A more comprehensive literature review on bad data processing can be found in \cite{lo1983development,handschin1975bad}.

\vspace{-0.3cm}
\subsection{Contributions}
In this paper, we formulate the state estimation problem in a Maximum Likelihood (ML) framework, and develop the {\it Decentralized Adaptive Re-weighted State Estimation} (DARSE) scheme specifically for wide-area PSSE. The DARSE scheme deals with bad data similarly to \cite{merrill1971bad, falcao1982power}, where the bad data variances are adaptively updated based on the measurements. Furthermore, the DARSE scheme generalizes the Gossip-based Gauss-Newton (GGN) algorithm we proposed in \cite{li2012convergence} under an adaptive setting, which exhibits faster convergence than the distributed state estimation scheme in \cite{xie2012fully} derived from first order diffusion algorithms. Another important contribution of this paper is that we prove sufficient conditions for the convergence of GGN algorithm, by showing that the Jacobian of power flow equations satisfies strictly the Lipschitz condition. Furthermore, thanks to the adaptive features of the DARSE scheme, it automatically adjusts the weights for different sensor observations based the measurement quality, to reduce the impacts of bad data on the state estimates. The main benefit of the DARSE scheme is that it is completely adaptable to both time-varying measurements quality and network conditions. The quality of measurements can degrade in a random fashion and also the communication network can experience failures. With mild connectivity conditions, DARSE is able to deliver accurate estimates of the global state at each distributed area. Our claims are verified numerically on the IEEE-118 system.

\vspace{-0.3cm}
\subsection{Notations}
In this paper, we used the following notations:
\begin{itemize}
	\item $\mathrm{i}$: imaginary unit and $\mathbb{R}$ and $\mathbb{C}$: real and complex numbers.
	\item $\Re\{\cdot\}$ and $\Im\{\cdot\}$: the real and imaginary part of a number.
	\item $\mathbf{I}_N$: an $N\times N$ identity matrix.
	\item $\mathbf{1}_N$: an $N\times 1$ vector with all entries equal $1$.
	\item $\|\mathbf{A}\|$ and $\|\mathbf{A}\|_F$ are the $2$-norm\footnote{The $2$-norm of a matrix is the maximum of the absolute value of the eigenvalues and the $2$-norm of a vector $\mathbf{x}\in\mathbb{R}^N$ is $\|\mathbf{x}\|=\sqrt{\sum_{n=1}^N x_n^2}$.} and $F$-norm of a matrix.
	\item $\mathrm{vec}(\mathbf{A})$ is the vectorization of a matrix $\mathbf{A}$.
	\item $\mathbf{A}^T$, $\mathrm{Tr}(\mathbf{A})$, $\lambda_{\min}(\mathbf{A})$ and $\lambda_{\max}(\mathbf{A})$: transpose, trace, minimum and maximum eigenvalues of matrix $\mathbf{A}$.
	\item $\otimes$ is the Kronecker product and $\mathbb{E}[\cdot]$ means expectation.
\end{itemize}

\section{Power System State Estimation}
The power grid is characterized by {\it buses} that represents interconnections, generators or loads, denoted by the set $\mathcal{N}\triangleq\{1,\cdots,N\}$. The grid topology is determined by the edge set $\mathcal{E} \triangleq \{\{n,m\}\}$ with cardinality $|\mathcal{E}|=E$, which corresponds to the transmission line between bus $n$ and $m$. The Energy Management Systems (EMS) collect measurements on certain buses and transmission lines to estimate the state of the power system, i.e., the voltage phasor $V_n\in\mathbb{C}$ at each bus $n\in\mathcal{N}$. In this paper, we consider the Cartesian coordinate representation using the real and imaginary components of the complex voltage phasors $\mathbf{v}=[\Re\{V_1\},\cdots,\Re\{V_N\},\Im\{V_1\},\cdots,\Im\{V_N\}]^T$. This representation facilitates our derivations because it expresses PMU measurements as a linear mapping and SCADA measurements as quadratic forms of the state $\mathbf{v}$ as in \cite{lavaei2010zero}.

\vspace{-0.3cm}
\subsection{Measurement Model}
Since there are 2 complex injection measurements at each bus and 4 complex flow measurements on each line, this amounts to twice as many real variables. Thus the measurement ensemble has $M=4N+8E$ entries in an aggregate vector partitioned into four sections\footnote{Subscripts $\{\mathcal{V},\mathcal{C},\mathcal{I},\mathcal{F}\}$ mean voltage, current, injection and flow.}
\begin{align}
	 \mathbf{z}[t]=[\mathbf{z}_{\mathcal{V}}^T[t],\mathbf{z}_{\mathcal{C}}^T[t],\mathbf{z}_{\mathcal{I}}^T[t],\mathbf{z}_{\mathcal{F}}^T[t]]^T,
\end{align}	
containing the length-$2N$ voltage phasor $\mathbf{z}_{\mathcal{V}}[t]$ and power injection vector $\mathbf{z}_{\mathcal{I}}[t]$ at bus $n\in\mathcal{N}$, the length-$4E$ current phasor $\mathbf{z}_{\mathcal{C}}[t]$ and power flow vector $\mathbf{z}_{\mathcal{F}}[t]$ on line $(n,m)\in\mathcal{E}$ at bus $n$. These measurements are gathered at a certain periodicity $T_{\textrm{\tiny SE}}$ for state estimation.
In contrast to the slow rate SCADA measurements, since PMU devices also provide fast samples for dynamic monitoring and control, some pre-processing is needed to align measurements that come at widely different rates. This is an important practical issue prior to the state estimation, however it is unrelated with the estimation methodology considered here and therefore is left for future investigation. State estimation is performed using SCADA measurements $\left\{\mathbf{z}_{\mathcal{I}}[t],\mathbf{z}_{\mathcal{F}}[t]\right\}$ and PMU measurements 
$\left\{\mathbf{z}_{\mathcal{V}}[t],\mathbf{z}_{\mathcal{C}}[t]\right\}$ that have been pre-processed and aligned.

Defining the power flow equations $\mathbf{f}_{(\cdot)}(\mathbf{v})$ in Appendix \ref{power_flow_equations_appendix} and letting $\bar{\mathbf{v}}[t]$ be the true state at time $t$, the individual vector $\mathbf{z}_{(\cdot)}[t]=\mathbf{f}_{(\cdot)}(\mathbf{\bar{v}}[t])+ \mathbf{r}_{(\cdot)}[t]$ contains observations corrupted by measurement noise $\mathbf{r}_{(\cdot)}[t]$ that arises from instrumentation imprecision and random outliers whose variances are potentially much larger due to attacks or equipment malfunction. The entries that have large variances are what we call {\it bad data}. Then we have the measurement model below
\begin{align}\label{meas-model_all}
	\mathbf{z}[t] = \mathbf{f}(\mathbf{\bar{v}}[t]) + \mathbf{r}[t],
\end{align}
where
\begin{align}
	\mathbf{r}[t] &= [\mathbf{r}_{\mathcal{V}}^T[t],\mathbf{r}_{\mathcal{C}}^T[t],\mathbf{r}_{\mathcal{I}}^T[t],\mathbf{r}_{\mathcal{F}}^T[t]]^T\\
	\mathbf{f}(\mathbf{v}) &= [\mathbf{f}_{\mathcal{V}}^T(\mathbf{v}),\mathbf{f}_{\mathcal{C}}^T(\mathbf{v}),\mathbf{f}_{\mathcal{I}}^T(\mathbf{v}),\mathbf{f}_{\mathcal{F}}^T(\mathbf{v})]^T.
\end{align}	

A practical data collection architecture in power systems (compatible with WAMS and SCADA) consists of $I$ interconnected {\it areas}, where each {\it area} records a subset of $\mathbf{z}$ in \eqref{meas-model_all}. To describe the measurement selection in each area, we define a binary mask $\mathbf{T}_{i,(\cdot)}\in\{0,1\}^{M_{i,(\cdot)}\times M}$ with $M_i$ rows and $M$ columns, where each row is a canonical basis vector $\mathbf{e}_m=[0,\cdots,1,\cdots,0]^T$ picking the corresponding element in each category from $\{\mathcal{V},\mathcal{C},\mathcal{I},\mathcal{F}\}$. Letting $$\mathbf{T}_i\triangleq [\mathbf{T}_{i,\mathcal{V}}^T, \mathbf{T}_{i,\mathcal{C}}^T, \mathbf{T}_{i,\mathcal{I}}^T,\mathbf{T}_{i,\mathcal{F}}^T]^T$$ be the selection matrix in the $i$-th area and applying this mask on $\mathbf{z}$, the measurements in the $i$-th area are selected as
\begin{align}\label{meas-model}
	\mathbf{c}_i[t] = \mathbf{f}_i(\bar{\mathbf{v}}[t]) + \mathbf{r}_i[t],
\end{align}
where $\mathbf{c}_i[t] \triangleq \mathbf{T}_i\mathbf{z}[t] = [\mathbf{c}_{i,\mathcal{V}}^T[t], \mathbf{c}_{i,\mathcal{C}}^T[t],\mathbf{c}_{i,\mathcal{I}}^T[t],\mathbf{c}_{i,\mathcal{F}}^T[t]]^T$ and similarly $\mathbf{f}_i(\cdot)=\mathbf{T}_i\mathbf{f}(\cdot)$, $\mathbf{r}_i[t]=\mathbf{T}_i\mathbf{r}[t]$. We assume that $\mathbf{r}_i[t]$'s are Gaussian and uncorrelated between different areas, which has an unknown covariance denoted by
\begin{align}
	\mathbf{R}_i[t]\triangleq \mathrm{diag}[\epsilon_{i,1}[t],\cdots,\epsilon_{i,M_i}[t]],
\end{align}	
where $M_i=M_{i,\mathcal{V}}+M_{i,\mathcal{C}}+M_{i,\mathcal{I}}+M_{i,\mathcal{F}}$ is the total number of observations from each type of measurements.


\begin{figure*}
\centering
\includegraphics[width=0.8\linewidth]{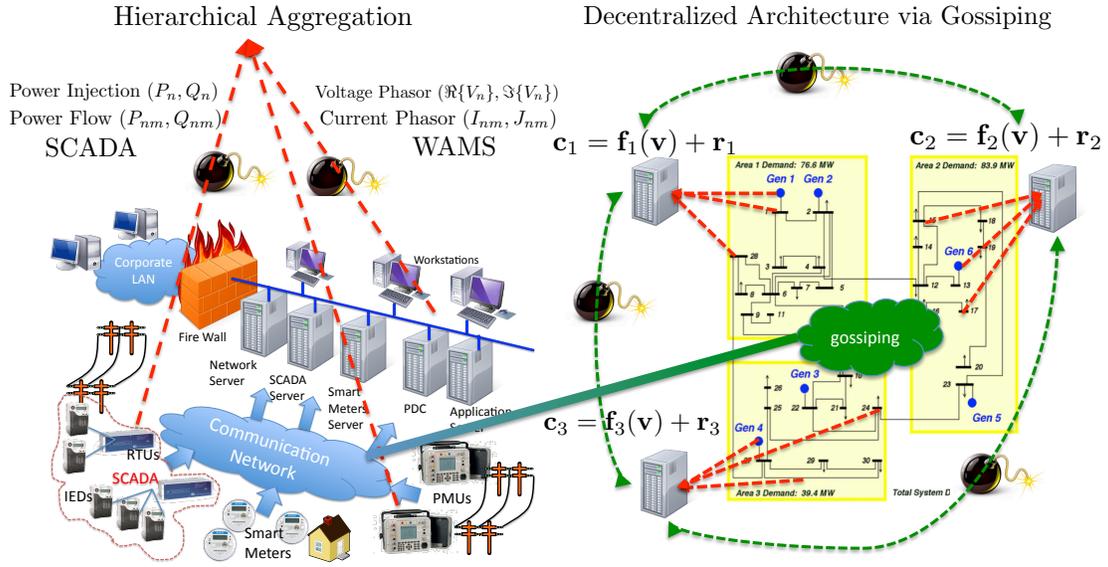}
\caption{Centralized architecture v.s. the decentralized architecture for DARSE.}\label{fig.architecture}
\vspace{-0.2cm}
\end{figure*}

\vspace{-0.3cm}
\subsection{Maximum Likelihood (ML) Estimation}\label{ML_estimation}
Using the measurement model in \eqref{meas-model}, the ML estimate of the state is obtained by maximizing the likelihood function with unknown noise covariances $\left\{\mathbf{R}_i[t]\right\}_{i=1}^I$ over the state space $\mathbb{V}$
\begin{align*}
	\left\{\widehat{\mathbf{v}}[t],\widehat{\mathbf{R}}_i[t]\right\}
	&=
	\arg\underset{\mathbf{v}\in\mathbb{V},\mathbf{R}_i}{\max}~\log \mathbb{P}\left(\left\{\mathbf{c}_i[t]\right\}_{i=1}^I\Big|\,\mathbf{v},\left\{\mathbf{R}_i[t]\right\}_{i=1}^I\right),
\end{align*}
which is equivalent to minimizing the following cost function
\begin{align}
	\underset{\mathbf{v}\in\mathbb{V},\{\mathbf{R}_i\}_{i=1}^I}{\min}~
	 \sum_{i=1}^I \left\|\mathbf{c}_i[t]-\mathbf{f}_i(\mathbf{v})\right\|_{\mathbf{R}_i^{-1}}^2
	 +\sum_{i=1}^I\log|\mathbf{R}_i|. \label{ML_joint}
\end{align}
Note that due to the quadratic nature of power flow equations in Appendix \ref{power_flow_equations_appendix}, the cost function \eqref{ML_joint} is highly non-convex in $\mathbf{v}$ and there exist multiple stationary points $\mathbf{v}^\star$ in the set $\mathbb{V}^\star$ that result in local minima of \eqref{ML_joint}. According to \cite{kayfundamentals}, these stationary points can be determined by setting the derivatives with respect to $\mathbf{v}$ and $\{\epsilon_{i,m}\}_{m=1,\cdots,M_i}^{i=1,\cdots,I}$ to zero
\begin{align}
	 \sum_{i=1}^I\mathbf{F}_i^T(\mathbf{v}^\star)\mathbf{R}_i^\star\left[\mathbf{c}_i[t]-\mathbf{f}_i(\mathbf{v}^\star)\right] &= \mathbf{0},\quad \mathbf{v}^\star\in\mathbb{V}^\star \label{x_opt_1}\\
	 |c_{i,m}[t]-f_{i,m}(\mathbf{v}^\star)|^2 &= \epsilon_{i,m}^\star. \label{covariance_est}
\end{align}
Clearly, the ML estimate $\widehat{\mathbf{v}}[t]$ achieve the global minimum of \eqref{ML_joint} and belongs to the set $\mathbb{V}^\star$. Solving for the ML estimate $\widehat{\mathbf{v}}[t]$ would require substituting $\epsilon_{i,m}^\star$ (with the unknown $\mathbf{v}^\star$) back into $\mathbf{R}_i^\star$ in \eqref{x_opt_1} and searching for the point $\mathbf{v}^\star$ that achieves the global minimum of \eqref{ML_joint}. Therefore, the joint ML estimates are obtained by equivalently solving the coupled equations
\begin{align}
	 \widehat{\mathbf{v}}[t] &= \arg\underset{\mathbf{v}\in\mathbb{V}}{\min}
	 \sum_{i=1}^I \left\|\mathbf{c}_i[t]-\mathbf{f}_i(\mathbf{v}[t])\right\|_{\widehat{\mathbf{R}}_i^{-1}[t]}^2
	 \label{x_opt}\\
	 \widehat{\epsilon}_{i,m}[t] &=  |c_{i,m}[t] - f_{i,m}(\widehat{\mathbf{v}}[t])|^2 \label{variance_opt}\\
	 \widehat{\mathbf{R}}_i[t] &=\mathrm{diag}[\cdots,\widehat{\epsilon}_{i,m}[t],\cdots]
\end{align}
where we have used the fact that given the ML estimate of the noise variance $\widehat{\epsilon}_{i,m}[t]$, the stationary points of the non-linear least squares (NLLS) problem in \eqref{x_opt} are identical to that in \eqref{x_opt_1}. This equivalence can be regarded as the non-linear version of the linear estimation with nuisance parameters in \cite{kayfundamentals}. However, this approach is still highly non-linear and complex. In the following, we take advantage of the streaming measurements to switch adaptively between estimating the state \eqref{x_opt} and estimating the variances \eqref{variance_opt}.

\vspace{-0.3cm}
\subsection{Adaptive Re-weighted State Estimation}\label{ARSE_sec}
If the noise covariance is known, the state \eqref{x_opt} can be obtained directly from conventional PSSE \cite{schweppe1974static}. Therefore, we propose to use the previous covariance estimate as a substitute\footnote{In general, a better substitute can be predicted using the temporal statistics of the random process $\mathbf{r}[t]$, but here we simply use the previous estimate.} of $\widehat{\mathbf{R}}_i[t]$, $i=1,\cdots,I$ to re-weight the measurements in the current snapshot, and propose the {\it Adaptive Re-weighted State Estimation} (ARSE) scheme\footnote{If desired, one can iterate once again the state estimation after the outlier covariance has been updated to give a better state.} in Algorithm \ref{ARGN_algorithm}.

\begin{algorithm}[t]
\caption{ARSE Scheme}\label{ARGN_algorithm}
\begin{algorithmic}[1]
\STATE Predict outlier covariance $\bdsb{\Gamma}_i = \widehat{\mathbf{R}}_i[t-1]$, $i=1,\cdots,I$
\STATE Update state estimates
\begin{align*}
	\widehat{\mathbf{v}}[t] &= \arg\underset{\mathbf{v}\in\mathbb{V}}{\min}  \sum_{i=1}^I \left[\mathbf{c}_i[t]-\mathbf{f}_i(\mathbf{v})\right]^T\bdsb{\Gamma}_i^{-1}\left[\mathbf{c}_i[t]-\mathbf{f}_i(\mathbf{v})\right]
\end{align*}
\STATE Adjust covariance $\widehat{\mathbf{R}}_i[t] = \mathrm{diag}[\widehat{\epsilon}_{i,1}[t],\cdots,\widehat{\epsilon}_{i,M_i}[t]]$
\begin{align}\label{update_gross_error_stat}
	\widehat{\epsilon}_{i,m}[t] =  |c_{i,m}[t] - f_{i,m}(\widehat{\mathbf{v}}[t])|^2.
\end{align}
\end{algorithmic}
\end{algorithm}
%

Our objective is to harness the computation capabilities in each area to perform state estimation \eqref{x_opt} and \eqref{variance_opt} online in a decentralized fashion, as shown on the right in Fig. \ref{fig.architecture}. Note that each area estimates {\it the global state} $\mathbf{v}$, rather than the portion that pertains to its local facilities. Since {\it step (1)} and {\it step (3)} in Algorithm \ref{ARGN_algorithm} are decoupled between different areas, their decentralized implementations are straightforward. Now we omit the time index $t$ and focus on solving {\it step (2)}
\begin{align}\label{centralized_SE}
	\widehat{\mathbf{v}} &= \arg\underset{\mathbf{v}\in\mathbb{V}}{\min}~
    	\sum_{i=1}^I\|\mathbf{\tilde{c}}_i-\mathbf{\tilde{f}}_i(\mathbf{v})\|^2,
\end{align}
where $\mathbf{\tilde{f}}_i(\mathbf{v})= \bdsb{\Gamma}_i^{-\frac{1}{2}}\mathbf{f}_i(\mathbf{v})$ and $\mathbf{\tilde{c}}_i=\bdsb{\Gamma}_i^{-\frac{1}{2}}\mathbf{c}_i$. Note that we propose to solve \eqref{centralized_SE} in a decentralized setting, where each area has a local estimate $\mathbf{v}_i^k$ that is in consensus with other areas $i'\neq i$ and converges to the global estimate $\widehat{\mathbf{v}}$.

Traditionally, state estimation solvers attempt to find the global minimum $\widehat{\mathbf{v}}$ numerically by Gauss-Newton (GN) algorithm iterations, whose updates are given by
\begin{align}
	\mathbf{v}_i^{k+1} &= P_{\mathbb{V}}\left[\mathbf{v}_i^k + \mathbf{d}_i^k\right],\quad
	\mathbf{d}_i^k = \mathbf{Q}^{-1}(\mathbf{v}_i^k)\mathbf{q}(\mathbf{v}_i^k),\label{central_descent1}
\end{align}
where $P_{\mathbb{V}}(\cdot)$ is a projection on the space $\mathbb{V}$, $\mathbf{q}(\mathbf{v}_i^k)$ and $\mathbf{Q}(\mathbf{v}_i^k)$ are scaled gradients and GN Hessian of the cost function
\begin{align}\label{rR}
	\mathbf{q}(\mathbf{v}_i^k) &= \frac{1}{I}\sum_{p=1}^{I}\mathbf{\tilde{F}}_p^T(\mathbf{v}_i^k)(\mathbf{\tilde{c}}_p-\mathbf{\tilde{f}}_p(\mathbf{v}_i^k))\\
	\mathbf{Q}(\mathbf{v}_i^k) &= \frac{1}{I}\sum_{p=1}^{I}\mathbf{\tilde{F}}_p^T(\mathbf{v}_i^k)\mathbf{\tilde{F}}_p(\mathbf{v}_i^k),\nonumber
\end{align}
with $\mathbf{\tilde{F}}_i(\mathbf{v}) =\bdsb{\Gamma}_i^{-\frac{1}{2}}\mathbf{T}_i{\mathrm{d}\mathbf{f}(\mathbf{v})}/{\mathrm{d}\mathbf{v}^T} =\bdsb{\Gamma}_i^{-\frac{1}{2}}\mathbf{T}_i\mathbf{F}(\mathbf{v})$ computed from $\mathbf{F}(\mathbf{v})$ in Appendix \ref{power_flow_equations_appendix}. However, each area knows only its own iterate $\mathbf{v}_i^k$ and partial measurements $\mathbf{\tilde{c}}_i$ and power flow equations $\mathbf{\tilde{f}}_i(\cdot)$ in \eqref{rR}, which makes it impossible to implement {\it step (2)} in a decentralized setting.


\section{Decentralized State Estimation and Tracking}\label{Decentralized_SE}
As discussed in Section \ref{ARSE_sec}, it is straightforward to decentralize the computations for {\it step (1)} and {\it step (3)} in Algorithm \ref{ARGN_algorithm}. The key enabling technique we propose for {\it step (2)},  is the Gossip-based Gauss-Newton (GGN) algorithm that we proposed in \cite{li2012convergence}. 
Next we describe the GGN algorithm to make the paper self-contained. We interchangeably use {\it area} and {\it agent} to refer to the entity communicating and performing the computation. There are two time scales in the GGN algorithm, one is the time for GN {\it update} denoted by the discrete time index ``$k$" and the other is the gossip {\it exchange} between every two GN updates denoted by another discrete time index ``$\ell$". All the agents have a clock that runs synchronously and determines the instants $t=\tau_k$ for the $k$-th GN update across the network. During the interval $t\in[\tau_k,\tau_{k+1})$, the agents communicate and exchange information with each other at random times $\tau_{k,\ell}\in[\tau_k,\tau_{k+1})$ over $\ell=1,\cdots,\ell_k$ interactions. In the following, we describe the local update model at each  agent in Section \ref{update_model} and introduce in Section \ref{gossip_model} the gossiping model between every two updates.

%

\subsection{Local Update Model}\label{update_model}
The idea behind our GGN algorithm is to take advantage of the structure of the centralized update \eqref{rR} as a sum of local terms and obtain an approximation of the sum by computing the scaled average via gossiping, which is interlaced with the optimization iterations. Therefore, we use the ``network average'' of different areas as surrogates of $\mathbf{q}(\mathbf{v}_i^k)$ and $\mathbf{Q}(\mathbf{v}_i^k)$, which can be obtained via gossiping
\begin{align}\label{Hh}
	\bar{\mathbf{h}}_k &=  \frac{1}{I}\sum_{i=1}^{I}\mathbf{\tilde{F}}_i^T(\mathbf{v}_i^k)\left(\mathbf{\tilde{c}}_i-\mathbf{\tilde{f}}_i(\mathbf{v}_i^k)\right)\\
	\bar{\mathbf{H}}_k &=  \frac{1}{I}\sum_{i=1}^{I}\mathbf{\tilde{F}}_i^T(\mathbf{v}_i^k)\mathbf{\tilde{F}}_i(\mathbf{v}_i^k).
\end{align}
Define local vector at the $i$-th agent for the $\ell$-th gossip
\begin{align}\label{information_vec}
	\bdsb{\mathcal{H}}_{k,i}(\ell)
	=
	\begin{bmatrix}
		\mathbf{h}_{k,i}(\ell)\\
		\mathrm{vec}\left[\mathbf{H}_{k,i}(\ell)\right]
	\end{bmatrix},
\end{align}
evolving from initial conditions
\begin{align}
	\mathbf{h}_{k,i}(0) &\triangleq\mathbf{\tilde{F}}_i^T(\mathbf{v}_i^k)(\mathbf{\tilde{c}}_i-\mathbf{\tilde{f}}_i(\mathbf{v}_i^k))\\
	\mathbf{H}_{k,i}(0) &\triangleq \mathbf{\tilde{F}}_i^T(\mathbf{v}_i^k)\mathbf{\tilde{F}}_i(\mathbf{v}_i^k).
\end{align}	
Clearly, the surrogates are the averages of the initial conditions
\begin{align}
    \bar{\mathbf{h}}_k &= \sum_{i=1}^{I}\mathbf{h}_{k,i}(0)/I,\quad
    \bar{\mathbf{H}}_k = \sum_{i=1}^{I}\mathbf{H}_{k,i}(0)/I.
\end{align}
To compute this average in the network, all agents exchange their information $\bdsb{\mathcal{H}}_{k,i}(\ell) \rightarrow \bdsb{\mathcal{H}}_{k,i}(\ell+1)$ using the communication model, more precisely the gossip exchange equation \eqref{gossip_exchange}, as described below in Section \ref{gossip_model}. Then after $\ell_k$ exchanges, the local GGN descent at the $(k+1)$-th update for the $i$-th agent is
\begin{align}\label{local_descent}
	\mathbf{v}_i^{k+1} &= P_{\mathbb{V}}\left[\mathbf{v}_i^k + \mathbf{d}_i^k(\ell_k)\right],\\
	\mathbf{d}_i^k(\ell_k)
	&=  \mathbf{H}_{k,i}^{-1}(\ell_k) \mathbf{h}_{k,i}(\ell_k).
\end{align}

Finally, the {\it Decentralized Adaptive Re-weighted State Estimation} (DARSE) scheme is described in Algorithm \ref{DARSE_algorithm}.

\begin{algorithm}[t]
\caption{DARSE Scheme}\label{DARSE_algorithm}
\begin{algorithmic}[1]
\STATE Predict outlier covariance at each agent $\bdsb{\Gamma}_i = \widehat{\mathbf{R}}_i[t-1]$.
\STATE All agents run the GGN Algorithm via network gossiping.
\STATE {\bf obtain} initial variables  $\mathbf{v}_i^0$ at all agents $i\in\mathcal{I}$.
\STATE {\bf set} $k=0.$
\REPEAT
\STATE {\bf set} $k=k+1.$
\STATE {\bf initialization:} Obtain $\bdsb{\mathcal{H}}_{k,i}(0)$ in \eqref{information_vec} at each agent $i$.
\STATE {\bf gossiping:} Each agent $i$ exchanges with neighbors under URE protocol for $1\leq \ell \leq \ell_k$ according to \eqref{gossip_exchange}.
\STATE {\bf local update:} Each agent $i$ updates according to \eqref{local_descent}.
\UNTIL $k=K$ or $\left\|\mathbf{v}_i^{k+1}-\mathbf{v}_i^k\right\|\leq \epsilon$ and set $\widehat{\mathbf{v}}_i = \mathbf{v}_i^K$.
\STATE Adjust covariance $\widehat{\mathbf{R}}_i[t] = \mathrm{diag}[\widehat{\epsilon}_{i,1}[t],\cdots,\widehat{\epsilon}_{i,M_i}[t]]$
\begin{align*}
	\widehat{\epsilon}_{i,m}[t] =  |c_{i,m}[t] - f_{i,m}(\widehat{\mathbf{v}}[t])|^2.
\end{align*}
\end{algorithmic}
\end{algorithm}

\subsection{Uncoordinated Random Exchange (URE) Protocol}\label{gossip_model}
In this section, we introduce the Uncoordinated Random Exchange (URE) protocol, a popular gossip algorithm studied in literature \cite{tsitsiklis1984problems,blondel2005convergence,nedic2009distributed}. For each exchange in the URE protocol, an agent $i$ wakes up and chooses a neighbor agent $j$ to communicate during $[\tau_k,\tau_{k+1})$. The communication network is thus a time-varying graph $\mathcal{G}_{k,\ell}=(\mathcal{I},\mathcal{M}_{k,\ell})$ during $[\tau_{k,\ell},\tau_{k,\ell+1})$ for every GN update $k$ and gossip exchange $\ell$. The node set $\mathcal{I}=\{1,\cdots,I\}$ contains each {\it agent} in different area, and the edge set $\{i,j\}\in\mathcal{M}_{k,\ell}$ is formed by the communication links for that particular gossip exchange, which can be characterized by the adjacency matrix $\mathbf{A}_k(\ell)=[A_{ij}^{(k,\ell)}]_{I\times I}$
\begin{align}
	A_{ij}^{(k,\ell)} =
	\begin{cases}
		1, &\{i,j\}\in\mathcal{M}_{k,\ell}\\
		0, &\mathrm{otherwise}
	\end{cases}.
\end{align}

\begin{condition}\label{connectivity_frequency}
The composite graph $\mathcal{G}_k=\{\mathcal{I}, \bigcup_{\ell'=\ell}^{\infty}\mathcal{M}_{k,\ell'}\}$ for the $k$-th update is connected for all $\ell \geq 0$ and there exists an integer $L\geq 1$ such that for every agent pair $\{i,j\}$ in the composite graph, we have for any $\ell\geq 0$
\begin{align}
	\{i,j\} \in \mathcal{M}_{k,\ell}\bigcup \mathcal{M}_{k,\ell+1} \bigcup \cdots \bigcup \mathcal{M}_{k,\ell+L-1}.
\end{align}
\end{condition}
\noindent The above assumption states that all agent pairs $\{i,j\}$ that communicate directly infinitely many times constitute a connected network $\mathcal{G}_k$, and furthermore, there exists an active link between any agent pair $\{i,j\}\in\mathcal{G}_k$ every $L$ consecutive time slots $[\tau_{k,\ell},\tau_{k,\ell+L-1}]\subseteq(\tau_k,\tau_{k+1})$ for any $\ell$.


The gossip exchanges are pairwise and local \cite{boyd2006randomized}, where agent $i$ combines the information from agent $j$ with a certain weight $\beta$. Define a weight matrix $\mathbf{W}_k(\ell)\triangleq [W_{ij}^{k}(\ell)]_{I\times I}$ as
\begin{align}
	W_{ij}^{k}(\ell)
	=
	\begin{cases}
		\beta, & j\neq i, j\in\mathcal{M}_{k,\ell}\\
		(1-\beta), & j=i\\
		0, & \mathrm{otherwise}
	\end{cases}
\end{align}
representing the weight associated to the edge $\{i,j\}$. Therefore, the weight matrix $\mathbf{W}_k(\ell)$ has the same sparsity pattern as the communication network graph $\mathbf{A}_k(\ell)$, and it is determined by the agent connectivity. Suppose agent $I_{k,\ell}$ wakes up at $\tau_{k,\ell}\in[\tau_k,\tau_{k+1})$ and $J_{k,\ell}$ is the node picked by node $I_{k,\ell}$ with probability $\gamma_{I_{k,\ell},J_{k,\ell}}$. The weight matrix is then
\begin{align}
	\mathbf{W}_k(\ell) = \mathbf{I} - \beta \left(\mathbf{e}_{I_{k,\ell}} + \mathbf{e}_{J_{k,\ell}}\right)\left(\mathbf{e}_{I_{k,\ell}} + \mathbf{e}_{J_{k,\ell}}\right)^T.
\end{align}
Finally, each agent $i=1,\cdots,I$ mixes its local information with neighbors as
\begin{align}\label{gossip_exchange}
	\bdsb{\mathcal{H}}_{k,i}(\ell+1) = W_{ii}^{k}(\ell) \bdsb{\mathcal{H}}_{k,i}(\ell) + \sum_{j\neq i} W_{ij}^{k}(\ell) \bdsb{\mathcal{H}}_{k,j}(\ell).
\end{align}

{\it Remark:} Note that the typical URE protocol is random and asynchronous, which may not satisfy Condition \ref{connectivity_frequency} due to link formations and failures. The simulations in \cite{yang2011transition} show that the overall delay from substations in a IEEE-14 bus system to the control center is around $2$ms with bandwidth $100$-$1000$ Mbits/s. Thus we bound the worst case hop delay by discounting it with the network diameter $2/7\approx 0.6$ms. We assume that the state estimation here is performed every $10$ seconds rather than today's periodicity (minutes) \cite{yang2011transition}. If information is stored with $64$-bits per entry, the data packets sent by each agent per exchange has $64(2N+4N^2)$-bits. For a power system with $N=118$ buses with a communication bandwidth $100$ Mbits/s, the maximum exchange that can be accomodated in $10$ seconds is $10\times 10^8/(0.6\times 10^{-3} \times 10^8 + 64(2\times 118+4\times 118^2))\approx 300$, which is sufficiently large to avoid violating Condition \ref{connectivity_frequency}. Much fewer exchanges were used in our simulations, but the algorithm still converges with good performance.

\section{Convergence and Performance Guarantees}
\begin{condition}\label{lipshitz}
First, we impose the following conditions:
\begin{itemize}
	\item[(1)] The state space $\mathbb{V}$ is closed and convex.
    \item[(2)] The maximum and minimum costs are bounded and finite
    			\begin{align}
				\epsilon_{\max} &= \underset{\mathbf{v}\in\mathbb{V}}{\max} ~\sum_{i=1}^I\|\mathbf{\tilde{c}}_i-\mathbf{\tilde{f}}_i(\mathbf{v})\|<\infty\\
				\epsilon_{\min} &= \underset{\mathbf{v}\in\mathbb{V}}{\min}
				~\sum_{i=1}^I\|\mathbf{\tilde{c}}_i-\mathbf{\tilde{f}}_i(\mathbf{v})\|<\infty.
			\end{align}
    \item[(3)] The maximum and minimum eigenvalues of the GN Hessian are non-zero and finite
			\begin{align}
				    \sigma_{\min} &= \underset{\mathbf{v}\in\mathbb{V}}{\min}~\sqrt{\lambda_{\min}\left(\sum_{i=1}^I\mathbf{\tilde{F}}_i^T(\mathbf{v})\mathbf{\tilde{F}}_i(\mathbf{v})\right)}>0\\
				    \sigma_{\max} &= \underset{\mathbf{v}\in\mathbb{V}}{\min}~\sqrt{\lambda_{\max}\left(\sum_{i=1}^I\mathbf{\tilde{F}}_i^T(\mathbf{v})\mathbf{\tilde{F}}_i(\mathbf{v})\right)}<\infty.
			\end{align}

\end{itemize}
\end{condition}
Condition 2-(1) is easily satisfied by setting standard voltage limits and conditions 2-(2) \& 2-(3) can be guaranteed if the power system is observable \cite{clements1990observability} and the measurement noise is finite. 
Next, we prove that the Jacobians $\{\mathbf{\tilde{F}}_i(\mathbf{v})\}_{i=1}^I$ satisfy the Lipschitz condition, which is important for the convergence of GGN algorithm in the DARSE scheme.

%
\begin{lem}\label{lem_Jacobian_lipschitz}
The Jacobian matrix $\mathbf{\tilde{F}}_i(\mathbf{v})$ satisfies the Lipschitz condition for all $i$ and arbitrary $\mathbf{v},\mathbf{v}'\in\mathbb{V}$
\begin{align}
	\left\|\mathbf{\tilde{F}}_i(\mathbf{v})-\mathbf{\tilde{F}}_i(\mathbf{v}')\right\|\leq  \omega \left\|\mathbf{v}-\mathbf{v}'\right\|,\quad \forall i=1,\cdots,I
\end{align}	
where $\omega$ is the Lipschitz constant given in \eqref{quadratic_form}.
\end{lem}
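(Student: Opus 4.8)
The plan is to exploit the quadratic structure of the power flow equations, which forces the Jacobian to be \emph{affine} in the state and hence automatically Lipschitz. Recall from the measurement model that $\mathbf{f}(\mathbf{v})$ decomposes into the PMU part $(\mathbf{f}_{\mathcal{V}},\mathbf{f}_{\mathcal{C}})$, which is linear in $\mathbf{v}$, and the SCADA part $(\mathbf{f}_{\mathcal{I}},\mathbf{f}_{\mathcal{F}})$, which is a collection of quadratic forms $\mathbf{v}\mapsto\mathbf{v}^T\mathbf{M}\mathbf{v}$ built from the bus admittance matrix (Appendix \ref{power_flow_equations_appendix}). First I would differentiate each block: the linear blocks contribute a \emph{constant} sub-matrix to $\mathbf{F}(\mathbf{v})$, whereas each quadratic form contributes a row of the type $\mathbf{v}^T(\mathbf{M}+\mathbf{M}^T)$, which is linear in $\mathbf{v}$. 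Thus I can write $\mathbf{F}(\mathbf{v})=\mathbf{F}_c+\mathbf{G}(\mathbf{v})$, where $\mathbf{F}_c$ is constant and $\mathbf{G}(\cdot)$ is a linear map of its argument.

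The second step is to reduce $\mathbf{\tilde{F}}_i$ to $\mathbf{F}$. Since $\mathbf{\tilde{F}}_i(\mathbf{v})=\bdsb{\Gamma}_i^{-\frac{1}{2}}\mathbf{T}_i\mathbf{F}(\mathbf{v})$ with $\bdsb{\Gamma}_i^{-\frac{1}{2}}$ and $\mathbf{T}_i$ both constant, subtracting two states gives
\begin{align*}
\mathbf{\tilde{F}}_i(\mathbf{v})-\mathbf{\tilde{F}}_i(\mathbf{v}')
= \bdsb{\Gamma}_i^{-\frac{1}{2}}\mathbf{T}_i\,\mathbf{G}(\mathbf{v}-\mathbf{v}'),
\end{align*}
because the constant part $\mathbf{F}_c$ cancels and $\mathbf{G}$ is linear. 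Submultiplicativity of the spectral norm then yields $\|\mathbf{\tilde{F}}_i(\mathbf{v})-\mathbf{\tilde{F}}_i(\mathbf{v}')\|\le \|\bdsb{\Gamma}_i^{-\frac{1}{2}}\|\,\|\mathbf{T}_i\|\,\|\mathbf{G}(\mathbf{v}-\mathbf{v}')\|$, and $\|\mathbf{T}_i\|=1$ since $\mathbf{T}_i$ is a $0/1$ row-selection matrix with orthonormal rows.

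The third step is to bound $\|\mathbf{G}(\mathbf{v}-\mathbf{v}')\|$ linearly. Bounding the spectral norm by the Frobenius norm and summing the contribution of every injection/flow row --- each of which is $(\mathbf{v}-\mathbf{v}')^T(\mathbf{M}+\mathbf{M}^T)$ for the corresponding admittance-derived matrix $\mathbf{M}$ --- gives $\|\mathbf{G}(\mathbf{v}-\mathbf{v}')\|\le \kappa\,\|\mathbf{v}-\mathbf{v}'\|$, where $\kappa$ collects the norms of the quadratic-form matrices. Absorbing the weighting factor then produces the explicit constant
\begin{align}\label{quadratic_form}
\omega = \kappa\cdot\max_{i}\|\bdsb{\Gamma}_i^{-\frac{1}{2}}\|,
\end{align}
which establishes the claim for every $i$ simultaneously.

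The conceptual content is light --- affine maps are Lipschitz --- so the genuine work, and the step I expect to be most delicate, is the bookkeeping in the third step: writing the injection and flow Jacobian rows explicitly in terms of the real and imaginary parts of the admittance matrix, taking their symmetrized derivatives, and collecting the resulting constant-matrix norms into a clean closed form for $\kappa$ (and hence for $\omega$). One also needs $\|\bdsb{\Gamma}_i^{-\frac{1}{2}}\|<\infty$, i.e. the predicted variances are bounded away from zero; this is benign in the DARSE setting since each $\bdsb{\Gamma}_i$ is a diagonal covariance with strictly positive entries.
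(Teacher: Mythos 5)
Your proposal is correct and follows essentially the same route as the paper's own proof: both exploit the fact that the PMU blocks of $\mathbf{F}(\mathbf{v})$ are constant and the SCADA blocks are linear in $\mathbf{v}$ (so they cancel, respectively become linear in $\mathbf{v}-\mathbf{v}'$, upon differencing), factor out the constant matrices $\bdsb{\Gamma}_i^{-\frac{1}{2}}\mathbf{T}_i$, and then bound the remaining term through the Frobenius norm --- the paper just carries out your ``delicate bookkeeping'' step explicitly, obtaining $\|\mathbf{F}(\mathbf{v})-\mathbf{F}(\mathbf{v}')\|_F^2=(\mathbf{v}-\mathbf{v}')^T\mathbf{M}(\mathbf{v}-\mathbf{v}')$ with $\mathbf{M}=\mathbf{H}_{\mathcal{I}}^T\mathbf{H}_{\mathcal{I}}+\mathbf{H}_{\mathcal{F}}^T\mathbf{H}_{\mathcal{F}}$, so that your $\kappa$ equals $\sqrt{\|\mathbf{M}\|}$ and your $\omega=\kappa\max_i\|\bdsb{\Gamma}_i^{-\frac{1}{2}}\|$ coincides with the paper's $\omega=\max_i\sqrt{\|\mathbf{M}\|\lambda_{\min}^{-1}(\bdsb{\Gamma}_i)}$.
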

\begin{proof}
	See Appendix \ref{proof_lem_Jacobian_lipschitz}.
\end{proof}
\begin{corollary}\label{cor_Jacobian_lipschitz}
Given Condition \ref{lipshitz}, Lemma \ref{lem_Jacobian_lipschitz} and \cite[Theorem 12.4]{eriksson2004applied}, the following functions satisfy the Lipschitz conditions for all $\mathbf{v},\mathbf{v}'\in\mathbb{V}$
\begin{align*}
    \left\|\mathbf{\tilde{F}}_i^T(\mathbf{v})(\mathbf{\tilde{c}}_i-\mathbf{\tilde{f}}_i(\mathbf{v}))-\mathbf{\tilde{F}}_i^T(\mathbf{v}')(\mathbf{\tilde{c}}_i-\mathbf{\tilde{f}}_i(\mathbf{v}'))\right\|    &\leq \nu_{\delta}\left\|\mathbf{v}-\mathbf{v}'\right\|\\
    \left\|\mathbf{F}_i^T(\mathbf{v})\mathbf{F}_i(\mathbf{v})-\mathbf{F}_i^T(\mathbf{v}')\mathbf{F}_i(\mathbf{v}')\right\|    &\leq \nu_{\Delta}\left\|\mathbf{v}-\mathbf{v}'\right\|,
\end{align*}
where $\nu_{\delta} = \omega (\epsilon_{\max} + \sigma_{\max})$ and $\nu_{\Delta} = 2 \sigma_{\max} \omega$.
\end{corollary}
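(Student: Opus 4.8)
The plan is to view each of the two estimates as an instance of a single principle: a product of two matrix-valued maps that are each uniformly bounded and Lipschitz on $\mathbb{V}$ is again Lipschitz, with modulus controlled by (bound of one factor)$\times$(modulus of the other), summed over the two factors. This is exactly \cite[Theorem 12.4]{eriksson2004applied}, so the work reduces to exhibiting, for every factor appearing in the two products, a uniform operator-norm bound and a Lipschitz modulus, and then collecting the resulting constants. Boundedness of the factors is furnished by Condition \ref{lipshitz}, while the Lipschitz modulus of the Jacobian is furnished by Lemma \ref{lem_Jacobian_lipschitz}.

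For the first inequality I would abbreviate the scaled residual by $\mathbf{g}_i(\mathbf{v})\triangleq\mathbf{\tilde{c}}_i-\mathbf{\tilde{f}}_i(\mathbf{v})$ and regard the left-hand side as the increment of the product $\mathbf{\tilde{F}}_i^T(\mathbf{v})\mathbf{g}_i(\mathbf{v})$. Inserting the mixed term $\mathbf{\tilde{F}}_i^T(\mathbf{v}')\mathbf{g}_i(\mathbf{v})$ and applying the triangle inequality together with submultiplicativity of the operator norm gives
\begin{align*}
&\left\|\mathbf{\tilde{F}}_i^T(\mathbf{v})\mathbf{g}_i(\mathbf{v})-\mathbf{\tilde{F}}_i^T(\mathbf{v}')\mathbf{g}_i(\mathbf{v}')\right\|\\
&\quad\leq\left\|\mathbf{\tilde{F}}_i(\mathbf{v})-\mathbf{\tilde{F}}_i(\mathbf{v}')\right\|\left\|\mathbf{g}_i(\mathbf{v})\right\|\\
&\qquad+\left\|\mathbf{\tilde{F}}_i(\mathbf{v}')\right\|\left\|\mathbf{g}_i(\mathbf{v})-\mathbf{g}_i(\mathbf{v}')\right\|.
\end{align*}
I would then supply the ingredients one by one: $\|\mathbf{\tilde{F}}_i(\mathbf{v})-\mathbf{\tilde{F}}_i(\mathbf{v}')\|\leq\omega\|\mathbf{v}-\mathbf{v}'\|$ is Lemma \ref{lem_Jacobian_lipschitz}; the magnitude $\|\mathbf{g}_i(\mathbf{v})\|=\|\mathbf{\tilde{c}}_i-\mathbf{\tilde{f}}_i(\mathbf{v})\|\leq\epsilon_{\max}$ follows from Condition \ref{lipshitz}(2), since a single summand is dominated by the full sum; the norm bound $\|\mathbf{\tilde{F}}_i(\mathbf{v}')\|\leq\sigma_{\max}$ follows from Condition \ref{lipshitz}(3) because $\mathbf{\tilde{F}}_i^T\mathbf{\tilde{F}}_i\preceq\sum_{p}\mathbf{\tilde{F}}_p^T\mathbf{\tilde{F}}_p$ forces $\|\mathbf{\tilde{F}}_i(\mathbf{v})\|^2\leq\sigma_{\max}^2$; and the residual increment $\|\mathbf{g}_i(\mathbf{v})-\mathbf{g}_i(\mathbf{v}')\|=\|\mathbf{\tilde{f}}_i(\mathbf{v})-\mathbf{\tilde{f}}_i(\mathbf{v}')\|$ is controlled through the mean-value inequality, since $\mathbf{\tilde{F}}_i$ is the Jacobian of $\mathbf{\tilde{f}}_i$. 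The first pair then contributes $\omega\epsilon_{\max}\|\mathbf{v}-\mathbf{v}'\|$ and the second $\sigma_{\max}$ times the residual modulus, and these assemble into the announced coefficient $\nu_{\delta}=\omega(\epsilon_{\max}+\sigma_{\max})$.

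For the second inequality the identical add-and-subtract step applied to $\mathbf{F}_i^T(\mathbf{v})\mathbf{F}_i(\mathbf{v})$ produces two symmetric terms, each of the form (a norm of $\mathbf{F}_i$)$\times$(an increment of $\mathbf{F}_i$); bounding the norm by $\sigma_{\max}$ and the increment by $\omega\|\mathbf{v}-\mathbf{v}'\|$ gives the factor two and hence $\nu_{\Delta}=2\sigma_{\max}\omega$. I would flag two points as the places demanding care. First, Lemma \ref{lem_Jacobian_lipschitz} and Condition \ref{lipshitz} are stated for the scaled Jacobian $\mathbf{\tilde{F}}_i$, so the same estimates must be transferred to the unscaled $\mathbf{F}_i$ (equivalently, by re-running the argument of the lemma with $\bdsb{\Gamma}_i=\mathbf{I}$). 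Second, and this is the real obstacle, the bookkeeping of constants is delicate: one must decide precisely which of $\omega$, $\sigma_{\max}$ and $\epsilon_{\max}$ dominates each factor, and in particular reconcile the second-pair contribution in the first inequality --- nominally $\sigma_{\max}$ times the Lipschitz modulus of $\mathbf{g}_i$ supplied by the mean-value inequality --- with the coefficient $\omega\sigma_{\max}$ that appears inside the stated $\nu_{\delta}$. Once each factor's bound and modulus are pinned down, both inequalities follow by a direct application of \cite[Theorem 12.4]{eriksson2004applied}.
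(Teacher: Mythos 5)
Your strategy coincides with the paper's own: the paper supplies no computation for this corollary, only the invocation of the Lipschitz product rule (\cite[Theorem 12.4]{eriksson2004applied}), and your add-and-subtract decomposition is exactly what that invocation abbreviates. Your handling of the second inequality is complete and yields $\nu_\Delta = 2\sigma_{\max}\omega$ as stated, and your reading of the unscaled $\mathbf{F}_i$ there as standing for $\mathbf{\tilde{F}}_i$ (equivalently, re-running Lemma \ref{lem_Jacobian_lipschitz} with $\bdsb{\Gamma}_i=\mathbf{I}$) is the sensible one, since Condition \ref{lipshitz} only bounds the scaled Jacobians.

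The reconciliation you leave open in the first inequality, however, cannot be carried out, and the obstruction lies in the paper's stated constant, not in your bookkeeping. The two factors are $\mathbf{\tilde{F}}_i^T$ (bound $\sigma_{\max}$, Lipschitz modulus $\omega$) and $\mathbf{g}_i$ (bound $\epsilon_{\max}$, Lipschitz modulus $\sup_{\mathbb{V}}\|\mathbf{\tilde{F}}_i\|\leq\sigma_{\max}$ by the mean-value inequality on the convex set $\mathbb{V}$), so the product rule delivers
\begin{align*}
\left\|\mathbf{\tilde{F}}_i^T(\mathbf{v})\,\mathbf{g}_i(\mathbf{v})-\mathbf{\tilde{F}}_i^T(\mathbf{v}')\,\mathbf{g}_i(\mathbf{v}')\right\|
\leq \left(\omega\,\epsilon_{\max}+\sigma_{\max}^2\right)\left\|\mathbf{v}-\mathbf{v}'\right\|,
\end{align*}
and nothing in Condition \ref{lipshitz} or Lemma \ref{lem_Jacobian_lipschitz} permits replacing $\sigma_{\max}^2$ by $\omega\sigma_{\max}$: the constant $\omega$ controls the variation of the Jacobian, not of $\mathbf{\tilde{f}}_i$ itself. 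Indeed, the stated $\nu_\delta$ is refuted by the affine instance of the hypotheses (an area observed only through PMU phasors), where $\mathbf{\tilde{F}}_i$ is constant, so Lemma \ref{lem_Jacobian_lipschitz} holds with $\omega=0$ and the claimed bound would force $\mathbf{v}\mapsto\mathbf{\tilde{F}}_i^T(\mathbf{\tilde{c}}_i-\mathbf{\tilde{F}}_i\mathbf{v})$ to be constant, which it is not; its true modulus is $\sigma_{\max}^2$. So you should conclude your proof with $\nu_\delta=\omega\epsilon_{\max}+\sigma_{\max}^2$ and regard the paper's $\omega(\epsilon_{\max}+\sigma_{\max})$ as a slip in which $\sigma_{\max}^2$ became $\omega\sigma_{\max}$. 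The correction is immaterial downstream, since Condition \ref{cond_gossip_exchange} and Theorem \ref{theorem_convergence} use only $\nu=\max\{\nu_\delta,\nu_\Delta\}$ as a finite constant, never its particular algebraic form.
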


\subsection{Convergence Analysis}\label{convergence}

The GGN algorithm is initialized with $\mathbf{v}_i^0$ at each agent and continues until a stopping criterion is met. Since \eqref{centralized_SE} is NLLS problem that is non-convex similar to \eqref{ML_joint}, the iterate $\mathbf{v}_i^k$ may stop at any fixed point $\mathbf{v}^\star$ of \eqref{central_descent1} satisfying the first order condition similar to \eqref{x_opt_1}
\begin{align}\label{fixed_point}
	 \sum_{i=1}^I\mathbf{\tilde{F}}_i^T(\mathbf{v}^\star)\left(\mathbf{\tilde{c}}_i-\mathbf{\tilde{f}}_i(\mathbf{v}^\star)\right) = \mathbf{0}.
\end{align}
Clearly, the ML estimate $\widehat{\mathbf{v}}$ in \eqref{centralized_SE} is one of the fixed points and it is desirable to have the algorithm converge to this point. After we have proven Lemma \ref{lem_Jacobian_lipschitz} and Corollary \ref{cor_Jacobian_lipschitz} for power systems, the analysis presented in our paper \cite{li2012convergence} is particularly useful here. In the following, we impose a condition on the gossip exchange according to our analysis in \cite{li2012convergence}.
\begin{condition}\label{cond_gossip_exchange}
\cite{li2012convergence} Denote the minimum exchange as $\ell_\star = \min_k~\{\ell_k\}$. Let $\eta=\min\{\beta,1-\beta\}$ be the minimum non-zero entry in the weight matrix and let the sequence of exchanges $\{\ell_k\}_{k=0}^{\infty}$ satisfy\footnote{A simple choice is $\ell_0=\ell_\star$ and $\ell_k=\ell_{k-1}+1$, then $\lambda_\infty=1/(1-\lambda_\eta)$.}
\begin{align*}
    \lambda_\infty
    &\triangleq\underset{K\rightarrow \infty}{\lim}~\sum_{k=0}^{K} \lambda_\eta^{(\ell_k-\ell_\star)}< \infty\\
    \lambda_\eta &\triangleq (1-\eta^{IL})^{1/IL}.
\end{align*}
For any $\xi\in(0,1/2)$, the minimum exchange $\ell_\star$ is chosen as
\begin{align}\label{denominator}
	\ell_\star &= \left\lceil\log\left(\frac{\xi}{4D}\right)/\log\lambda_\eta\right\rceil\\
    D &\triangleq CC_2 (\nu\lambda_\infty C_1C_2 + 1)
\end{align}
where $\nu=\max\{\nu_\delta,\nu_\Delta\}$ and
\begin{align}\label{C_infty}
    C &\triangleq 2I\sigma_{\max}\sqrt{I(\epsilon_{\max}^2 + N\sigma_{\max}^2)} \left(\frac{1+\eta^{-IL}}{1-\eta^{IL}}\right)\\
    C_1 &\triangleq 2\left(1+\frac{\sigma_{\max}\epsilon_{\max}}{\sigma_{\min}^2}\right),\quad
    C_2=\frac{I}{\sigma_{\min}^2}
\end{align}
with $\epsilon_{\max}$, $\sigma_{\min}$ and $\sigma_{\max}$ given by Condition \ref{lipshitz}.
\end{condition}
The resultant theorem from \cite{li2012convergence} can be re-stated as follows.

    \begin{thm}\label{theorem_convergence}
\cite[Lemma 1, 2, 3 \& Theorem 1]{li2012convergence} Based on Lemma \ref{lem_Jacobian_lipschitz} and Condition \ref{connectivity_frequency}, \ref{lipshitz} \& \ref{cond_gossip_exchange}, then
the discrepancy between the local update \eqref{local_descent} and the exact update \eqref{central_descent1} is bounded for all $i$ and $k$
\begin{align}
	\left\|\mathbf{d}_i^k(\ell_k) - \mathbf{d}_i^k\right\|
	&\leq \kappa,
    \quad
    \kappa = 4C_1 D\lambda_\eta^{(\ell_\star+1)}
\end{align}
and the error between $\mathbf{v}_i^k$ generated by \eqref{local_descent} and the ML estimate $\widehat{\mathbf{v}}$ defined in \eqref{fixed_point} satisfies
\begin{align}\label{distributed_error_recursion}
	\left\|\mathbf{v}_i^{k+1}-\widehat{\mathbf{v}}\right\|
	&\leq
	T_1\left\|\mathbf{v}_i^k-\widehat{\mathbf{v}}\right\|^2
	 +T_2\left\|\mathbf{v}_i^k-\widehat{\mathbf{v}}\right\|+\kappa,
\end{align}
where $T_1 \triangleq {\omega}/{2\sigma_{\min}}$ and $T_2 \triangleq {\sqrt{2}\omega\epsilon_{\min}}/{\sigma_{\min}^2}$. Assuming that $\sqrt{2}\omega \epsilon_{\min} < 3\sigma_{\min}^2$, and that $\kappa \ll  {(1-T_2)^2}/{4 T_1}$, then for any $\mathbf{v}_i^0$ satisfying
\begin{align}\label{initial_condition_requirement}
	\left\|\mathbf{v}_i^0-\widehat{\mathbf{v}}\right\|< \frac{2\sigma_{\min}}{\omega} - \kappa,
\end{align}
the asymptotic estimation error can be bounded as
\begin{align}
	\underset{k\rightarrow\infty}{\limsup} \left\|\mathbf{v}_i^{k+1}-\widehat{\mathbf{v}}\right\| \leq \kappa.
\end{align}	
\end{thm}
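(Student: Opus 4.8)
The plan is to treat Theorem \ref{theorem_convergence} as an inexact, gossip-perturbed Gauss--Newton iteration and prove it in three stages, mirroring the argument of \cite{li2012convergence} but now justified for power systems by Lemma \ref{lem_Jacobian_lipschitz} and Corollary \ref{cor_Jacobian_lipschitz}. Stage one bounds the per-update gossip error $\|\mathbf{d}_i^k(\ell_k)-\mathbf{d}_i^k\|\le\kappa$; stage two converts this into the scalar error recursion \eqref{distributed_error_recursion}; and stage three analyzes that recursion as a one-dimensional dynamical system to extract the asymptotic bound. Throughout, the nonexpansiveness of the projection $P_{\mathbb{V}}$ (Condition \ref{lipshitz}-(1)) lets me replace comparisons of $P_{\mathbb{V}}[\mathbf{v}_i^k+\mathbf{d}_i^k(\ell_k)]$ by comparisons of the unprojected arguments, since $\widehat{\mathbf{v}}=P_{\mathbb{V}}[\widehat{\mathbf{v}}]$ is a fixed point of \eqref{fixed_point}.

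For stage one I would analyze the gossip recursion \eqref{gossip_exchange}. Writing the stacked iterate and using that each $\mathbf{W}_k(\ell)$ is doubly stochastic with minimum nonzero entry $\eta$, Condition \ref{connectivity_frequency} guarantees that the transition product contracts the disagreement $\|\bdsb{\mathcal{H}}_{k,i}(\ell)-\bar{\bdsb{\mathcal{H}}}_k\|$ geometrically at rate $\lambda_\eta=(1-\eta^{IL})^{1/IL}$ over every $IL$ slots, by the standard time-varying-consensus estimate. The initial disagreement is bounded through the magnitudes of $\mathbf{h}_{k,i}(0)$ and $\mathbf{H}_{k,i}(0)$, which Condition \ref{lipshitz}-(2,3) control by $\epsilon_{\max}$, $\sigma_{\max}$ and $N$; collecting these with the geometric-series sum yields the constant $C$ in \eqref{C_infty}. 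I would then propagate this $\bdsb{\mathcal{H}}$-error into the direction error via the inverse-perturbation identity $\mathbf{H}^{-1}\mathbf{h}-\bar{\mathbf{H}}^{-1}\bar{\mathbf{h}}=\mathbf{H}^{-1}(\mathbf{h}-\bar{\mathbf{h}})+\mathbf{H}^{-1}(\bar{\mathbf{H}}-\mathbf{H})\bar{\mathbf{H}}^{-1}\bar{\mathbf{h}}$, bounding $\|\mathbf{H}^{-1}\|$ and $\|\bar{\mathbf{H}}^{-1}\|$ by $\sigma_{\min}^{-2}$ (hence $C_2=I/\sigma_{\min}^2$) and $\|\bar{\mathbf{h}}\|$ by $\sigma_{\max}\epsilon_{\max}$; this produces $D$ and the claimed $\kappa=4C_1D\lambda_\eta^{(\ell_\star+1)}$, with $\ell_\star$ chosen as in \eqref{denominator} to make $\kappa$ as small as desired.

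For stage two I would derive \eqref{distributed_error_recursion}. Subtracting $\widehat{\mathbf{v}}$ and using nonexpansiveness gives $\|\mathbf{v}_i^{k+1}-\widehat{\mathbf{v}}\|\le\|\mathbf{v}_i^k+\mathbf{d}_i^k-\widehat{\mathbf{v}}\|+\kappa$, so it remains to analyze the exact Gauss--Newton step. Writing $\mathbf{v}_i^k+\mathbf{d}_i^k-\widehat{\mathbf{v}}=\mathbf{Q}^{-1}(\mathbf{v}_i^k)[\mathbf{Q}(\mathbf{v}_i^k)(\mathbf{v}_i^k-\widehat{\mathbf{v}})+\mathbf{q}(\mathbf{v}_i^k)]$ and inserting the fixed-point identity \eqref{fixed_point}, I would expand $\mathbf{\tilde{f}}_i(\widehat{\mathbf{v}})-\mathbf{\tilde{f}}_i(\mathbf{v}_i^k)$ in integral mean-value form and control the Jacobian variation with Lemma \ref{lem_Jacobian_lipschitz}. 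The variation of $\mathbf{\tilde{F}}_i$ along the segment contributes the quadratic term with coefficient $T_1=\omega/2\sigma_{\min}$, while the residual $\mathbf{\tilde{c}}_i-\mathbf{\tilde{f}}_i(\widehat{\mathbf{v}})$ (bounded by $\epsilon_{\min}$) multiplied by the Jacobian-difference term $\mathbf{\tilde{F}}_i(\mathbf{v}_i^k)-\mathbf{\tilde{F}}_i(\widehat{\mathbf{v}})$ gives the linear coefficient $T_2=\sqrt{2}\omega\epsilon_{\min}/\sigma_{\min}^2$; the factor $\sigma_{\min}^{-2}$ throughout comes from bounding $\|\mathbf{Q}^{-1}\|$. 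Corollary \ref{cor_Jacobian_lipschitz} is what lets me treat the combined gradient and Hessian terms, evaluated at differing agent iterates, as Lipschitz perturbations.

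Stage three reads \eqref{distributed_error_recursion} as the scalar map $g(e)=T_1e^2+T_2e+\kappa$ on $e_k=\|\mathbf{v}_i^k-\widehat{\mathbf{v}}\|$. The hypotheses $\sqrt{2}\omega\epsilon_{\min}<3\sigma_{\min}^2$ and $\kappa\ll(1-T_2)^2/4T_1$ ensure, as verified in \cite{li2012convergence}, that $g(e)-e=T_1e^2-(1-T_2)e+\kappa$ has two positive roots $e_-\le e_+$, so $g$ is increasing and a strict contraction toward the stable fixed point $e_-$ on $(e_-,e_+)$, with $e_-=O(\kappa)$ as $\kappa\to0$. I would then check that \eqref{initial_condition_requirement} places $e_0$ in the basin of attraction $[0,e_+)$, so that $e_k$ is driven monotonically into a neighborhood of $e_-$ and hence $\limsup_k\|\mathbf{v}_i^{k+1}-\widehat{\mathbf{v}}\|\le\kappa$. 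I expect the main obstacle to be stage two: the surrogates \eqref{Hh} aggregate terms evaluated at the \emph{different} agent iterates $\{\mathbf{v}_p^k\}$ rather than at a common point, so the exact-Gauss--Newton contraction analysis must be carried out simultaneously with a consensus bound on the iterate disagreement, and it is precisely the Lipschitz continuity of Lemma \ref{lem_Jacobian_lipschitz} and Corollary \ref{cor_Jacobian_lipschitz} that keeps this cross-coupling controllable and lets the three stages close.
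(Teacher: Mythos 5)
Your proposal is correct and follows essentially the same route as the paper: the paper does not prove Theorem \ref{theorem_convergence} in-text at all---it imports it verbatim from \cite[Lemmas 1--3, Theorem 1]{li2012convergence}, its only in-paper contribution being Lemma \ref{lem_Jacobian_lipschitz} and Corollary \ref{cor_Jacobian_lipschitz} to certify the Lipschitz hypotheses---and your three-stage reconstruction (time-varying consensus contraction at rate $\lambda_\eta$ giving $\kappa$ via the inverse-perturbation identity, the inexact Gauss--Newton recursion producing $T_1,T_2$, and the scalar fixed-point/basin analysis) is precisely the architecture of the cited argument, matching the provenance of the constants $C$, $C_1$, $C_2$, $D$ in Condition \ref{cond_gossip_exchange}. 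You also correctly identify the genuine technical crux (the surrogates in \eqref{Hh} aggregate terms at heterogeneous iterates $\mathbf{v}_p^k$, which is exactly where $\nu=\max\{\nu_\delta,\nu_\Delta\}$ from Corollary \ref{cor_Jacobian_lipschitz} enters $D$), though a full write-up would additionally need the explicit Weyl-type perturbation step showing $\mathbf{H}_{k,i}(\ell_k)$ stays invertible with $\|\mathbf{H}_{k,i}^{-1}(\ell_k)\|$ bounded---this is the role of the requirement $\xi\in(0,1/2)$ in the choice of $\ell_\star$ in \eqref{denominator}.
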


Theorem \ref{theorem_convergence} implies that all the agents will converge to an arbitrarily small neighborhood of the ML estimate $\widehat{\mathbf{v}}$ if each area is initialized with a value that is sufficiently close to the ML estimate. Our simulations show that one can converge to the ML estimate from quite inaccurate initial points, even violating \eqref{initial_condition_requirement}. Also, Condition \ref{cond_gossip_exchange} is imposed on the number of exchanges $\ell_k$  to control the convergence rate (i.e., determined by $T_1$ and $T_2$) and lower the numerical error $\kappa$. This condition is influenced by many factors, such as the number of agents. More specifically, the number of measurements, measurement type and measurement location can in fact greatly affect the parameters $\{\epsilon_{\min},\epsilon_{\max},\sigma_{\min},\sigma_{\max}\}$ in Condition \ref{lipshitz}, which in turn influences $T_1$, $T_2$ and $\ell_\star$ that determine the convergence rate of the algorithm.

\subsection{Performance Guarantee by PMU Initialization}
Theorem \ref{theorem_convergence} suggests that if the GGN iterate $\mathbf{v}_i^k$ is initialized in a certain neighborhood of $\widehat{\mathbf{v}}$, it converges to $\widehat{\mathbf{v}}$ with an error $\kappa$ resulting from gossiping. This means that a good initialization $\mathbf{v}_i^0$ around the the ML estimate $\widehat{\mathbf{v}}$ is important. In fact, an effective initializer is the re-scaled average of the voltage measurements $\mathbf{c}_{i,\mathcal{V}}$ of all areas because it measures the state directly.
 Next we propose a heuristic initialization scheme, which is shown to converge numerically.

\subsubsection{Centralized PMU Initialization $\mathbf{v}^0$}
As new measurements become available, a reasonable approach to initialize the state estimates  is to use a combination of the previous state estimate for the buses where there are no PMU installed and the direct state measurements given by the PMU when available. The mathematical expression for this choice stated below helps describing and motivating the decentralized initialization scheme that follows.

The PMU data vector $\mathbf{z}_{\mathcal{V}}[t]$ in \eqref{meas-model_all}, records a portion of the state. By permuting the entries of $\mathbf{c}_{i,\mathcal{V}}[t]$ with the matrix $\mathbf{T}_{i,\mathcal{V}}$, we can relate \eqref{meas-model_all} and the decentralized model \eqref{meas-model} as follows
\begin{align}\label{T_i_Z}
	\mathbf{T}_{i,\mathcal{V}}^T \mathbf{c}_{i,\mathcal{V}}[t]
	&= \mathbf{T}_{i,\mathcal{V}}^T\mathbf{T}_{i,\mathcal{V}}\mathbf{z}_{\mathcal{V}}[t].
\end{align}
where, by virtue of of $\mathbf{T}_{i,\mathcal{V}}$ structure, the matrix $\mathbf{I}_{i,\mathcal{V}} = \mathbf{T}_{i,\mathcal{V}}^T\mathbf{T}_{i,\mathcal{V}}$ is a masked identity matrix with non-zero entries on locations that are PMU-instrumented. Letting $\mathbf{I}_{\mathcal{V}}=\sum_{i=1}^I\mathbf{I}_{i,\mathcal{V}}$, we have $\mathbf{I}_{\mathcal{V}}\mathbf{z}_{\mathcal{V}}[t] = \sum_{i=1}^{I} \mathbf{T}_{i,\mathcal{V}}^T \mathbf{c}_{i,\mathcal{V}}[t]$, Therefore, the centralized initializer that merges PMUs measurements and outdated estimates can be written as
\begin{align}\label{x0}
	\mathbf{v}^0  = \mathbf{I}_{\mathcal{V}}\mathbf{z}_{\mathcal{V}}[t] + (\mathbf{I}-\mathbf{I}_{\mathcal{V}})\mathbf{s}_{\mathcal{V}}[t],
\end{align}
where $\mathbf{s}_{\mathcal{V}}[t]$ can be chosen arbitrarily (e.g. a stale estimate). It is of great interest to investigate the placement of PMU devices so that certain metrics are optimized, such as observability \cite{clements1990observability}, state estimation accuracy \cite{kekatos2011convex,li2011phasor} and mutual information \cite{li2012information}. This issue is not pursued here and the numerical results are based on an arbitrary choice.

\subsubsection{Decentralized Initializer $\mathbf{v}_i^0$ via Gossiping}
The ``exact central initializer" in \eqref{x0} can be re-written as
\begin{align}\label{exact_central_initializer}
	\mathbf{v}^0=\sum_{i=1}^{I} \mathbf{T}_{i,\mathcal{V}}^T \mathbf{c}_{i,\mathcal{V}}[t] + \left(\mathbf{I}-\sum_{i=1}^{I} \mathbf{T}_{i,\mathcal{V}}^T\mathbf{T}_{i,\mathcal{V}}\right)\mathbf{s}_{\mathcal{V}}[t].
\end{align}
However, $\sum_{i=1}^I\mathbf{T}_{i,\mathcal{V}}^T\mathbf{c}_{i,\mathcal{V}}[t]$ is the aggregated PMU measurements, while the $i$-th agent can only access its local measurements $\mathbf{T}_{i,\mathcal{V}}^T\mathbf{c}_{i,\mathcal{V}}[t]$. Since $\sum_{i=1}^I\mathbf{T}_{i,\mathcal{V}}^T\mathbf{c}_{i,\mathcal{V}}[t]$ is written as a sum, it can be obtained via gossiping with the initial state
\begin{align}\label{mu_i_0}
	\bdsb{\mathcal{V}}_i(0) = \mathbf{T}_{i,\mathcal{V}}^T\mathbf{c}_{i,\mathcal{V}}[t],
\end{align}
where the agents proceed the exchange with the URE protocol $\bdsb{\mathcal{V}}_i(\ell)\rightarrow\bdsb{\mathcal{V}}_i(\ell+1)$. Finally, the decentralized initializer is
\begin{align}\label{decentralized_initializer}
	\mathbf{v}_i^0
	= I\bdsb{\mathcal{V}}_i(\ell) + (\mathbf{I}-\mathbf{I}_{\mathcal{V}})\mathbf{s}_{\mathcal{V}}[t],
\end{align}
where $\mathbf{I}_{\mathcal{V}} = \mathrm{sgn}[\bdsb{\mathcal{V}}_i(\ell)]$ and $\mathrm{sgn}[\cdot]$ is the sign function $\mathrm{sgn}[v]=1$ for $v\neq 0$ and $0$ otherwise. If $\ell$ is sufficiently large, $\mathbf{v}_i^0$ converges to the centralized initializer in \eqref{x0}.

\section{Numerical Results}\label{numerical}

\nocite{UK_grid}

In this section, we illustrate the Mean Square Error (MSE) performance of the DARSE scheme. Given the distributed estimate in each area $\{\widehat{V}_{i,n}^{(k)}\}_{n=1}^N$ at each GN update, the MSE with respect to the voltage magnitude and voltage phase at the $i$-th site is
\begin{align*}
    \mathrm{MSE}_{V,i}^{(k)}  &= \sum_{n=1}^{N}(|\bar{V}_n|-|\widehat{V}_{i,n}^{(k)}|)^2,\\
    \mathrm{MSE}_{\Theta,i}^{(k)}  &= \sum_{n=1}^{N}(\angle\bar{V}_n-\angle\widehat{V}_{i,n}^{(k)})^2.
\end{align*}

In particular, the metric used in our comparisons are the cost in \eqref{centralized_SE},  $\mathrm{Val}_k = \sum_{i=1}^{I} \|\mathbf{\tilde{c}}_i[t]-\mathbf{\tilde{f}}_i(\mathbf{v}_i^k[t])\|^2$ evaluated using the decentralized estimates at each update, and the global $\mathrm{MSE}_V^{(k)} = \sum_{i=1}^I  \mathrm{MSE}_{V,i}^{(k)}$ and $\mathrm{MSE}_\Theta^{(k)} = \sum_{i=1}^I  \mathrm{MSE}_{\Theta,i}^{(k)}$.

In the simulations we used MATPOWER 4.0 test case IEEE-118 ($N=118$) system. We take the load profile from the UK National Grid load curve from \cite{UK_grid} and scale the base load from MATPOWER on load buses. Then we run the Optimal Power Flow (OPF) program to determine the generation dispatch over this period. This gives us the true state $\bar{\mathbf{v}}[t]$ and all the power quantities $\mathbf{f}(\bar{\mathbf{v}}[t])$ over this time horizon, which are all expressed in per unit (p.u.) values. The sensor observations are generated by adding independent errors $r_{i,m}[t]\sim \mathcal{N}(0,\sigma^2)$ with $\sigma=10^{-3}$. We divide the system into $I=10$ areas where $9$ areas has $12$ buses in each area and $1$ area has $10$ buses, all chosen at random from $1$ to $118$. For each area, we randomly choose $50\%$ of all the available measurements and particularly exploit the $36$ PMU measurements in Area $1$, $2$ and $3$. The optimization of PMU selection is beyond the scope of this paper and hence not pursued here. In the first snapshot where there is no previous state estimate, we choose the flat profile $\mathbf{s}_{\mathcal{V}}[t]=[\mathbf{1}_N^T ~\mathbf{0}_N^T]^T$.

\subsection{Comparison with Diffusion Algorithms without Bad Data}
In this subsection, we evaluate the overall performance of the DARSE scheme against existing network diffusion algorithms \cite{xie2012fully} and its extension to adaptive processing in \cite{kar2008distributed} over $3$ snapshots of measurements. However, the communication protocol in \cite{xie2012fully,kar2008distributed} requires agents to exchange information synchronously, and thus the URE protocol mentioned in Section \ref{gossip_model} does not fit the context. To make a fair comparison in terms of communication costs and accuracy, we modify the URE protocol for this particular comparison to a synchronous deterministic exchange protocol, where communication links exist between every two agents $\{i,j\}\in\mathcal{M}$ for $\forall i,j\in\mathcal{I}$, giving an adjacency matrix $\mathbf{A}= \mathbf{1}_I\mathbf{1}_I^T-\mathbf{I}$. The weight matrix is doubly stochastic in both cases constructed according to the Laplacian $\mathbf{L}= \mathrm{diag}(\mathbf{A}\mathbf{1}_I)-\mathbf{A}$ as $\mathbf{W}= \mathbf{I}_I - w \mathbf{L}$ with $w=\alpha/\max(\mathbf{A}\mathbf{1}_I)$ and $\alpha = 0.03$. For simplicity, we also do not simulate measurements that have bad data in this particular comparison.  The step-sizes for the approach in \cite{xie2012fully,kar2008distributed} are chosen as $\alpha_{\rm diff} = 0.01\ell^{-1},0.3\ell^{-1},0.5\ell^{-1},\ell^{-1}$, where $\ell$ is the gossip exchange index.

The diffusion algorithm proceeds at each exchange $\ell$, while the DARSE runs $\ell_k=\ell_0=\ell_\star=10$ exchanges for each update. Thus, the comparison is made on the same time scale by counting the number of gossip exchanges. Because we use $\ell_\star=10$ gossip exchanges between every two descent updates $k=1,\cdots, 20$, we have a total number of $200$ exchanges per snapshot. It can be seen from Fig. \ref{fig.diff_obj_dynamic} to \ref{fig.diff_Va_dynamic} that the DARSE scheme converges to the ML estimate after $k=15$ updates (i.e., $150$ message exchanges) for every snapshot and it tracks the state estimate accurately when new measurements stream in. The spikes observed in the plots are caused by the new measurements. Since the number of gossip exchanges is limited, the diffusion algorithm in \cite{xie2012fully} and \cite{kar2008distributed} suffers from slow convergence and fails to track the state accurately.

\begin{figure}[!t]
\begin{center}
{\subfigure[][$\mathrm{Val}_k$]{\resizebox{0.43\textwidth}{!}{\includegraphics{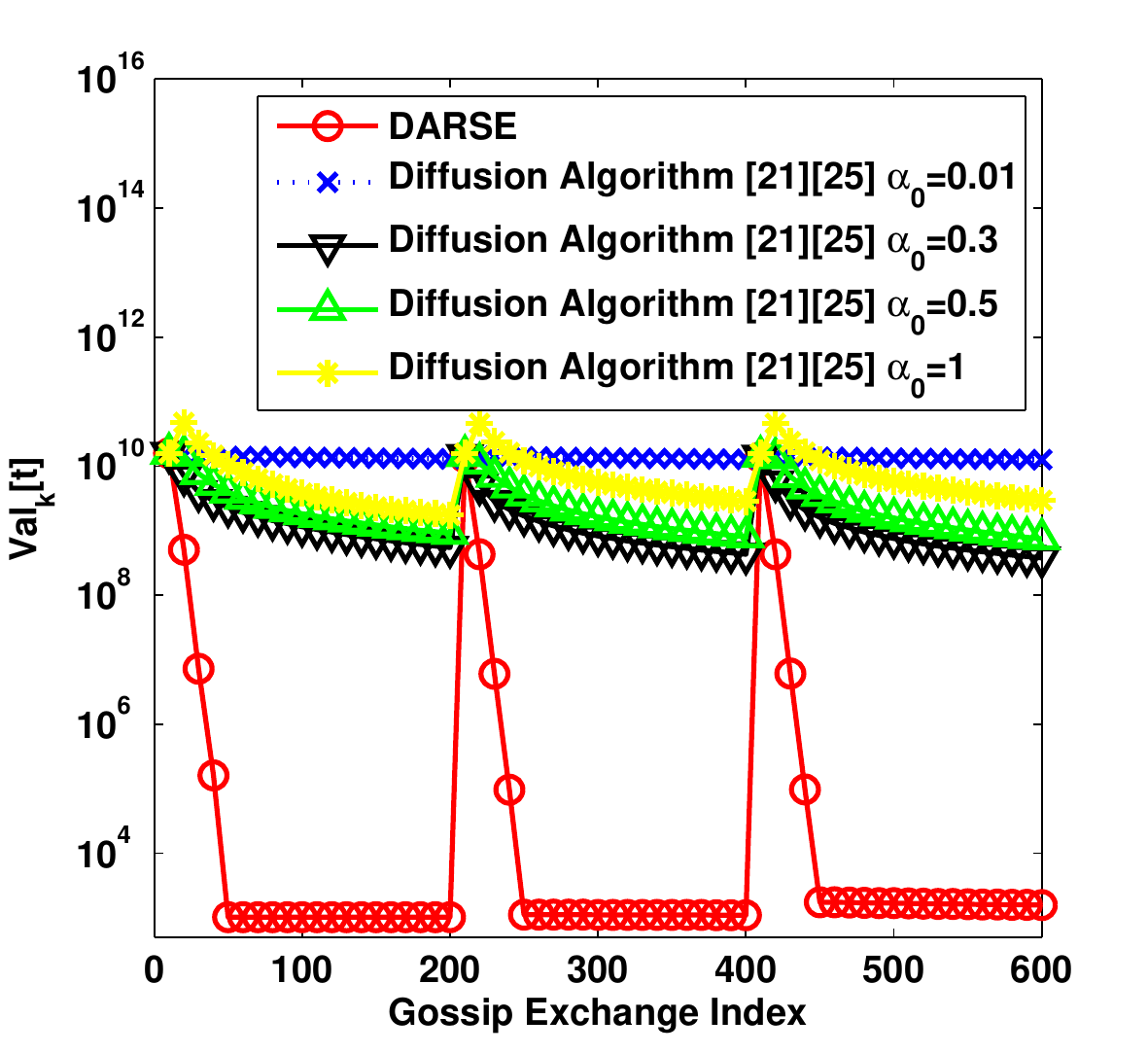}\label{fig.diff_obj_dynamic}}}}\\
\vspace{-0.2cm}
{\subfigure[][$\mathrm{MSE}_V^{(k)}$]{\resizebox{0.43\textwidth}{!}{\includegraphics{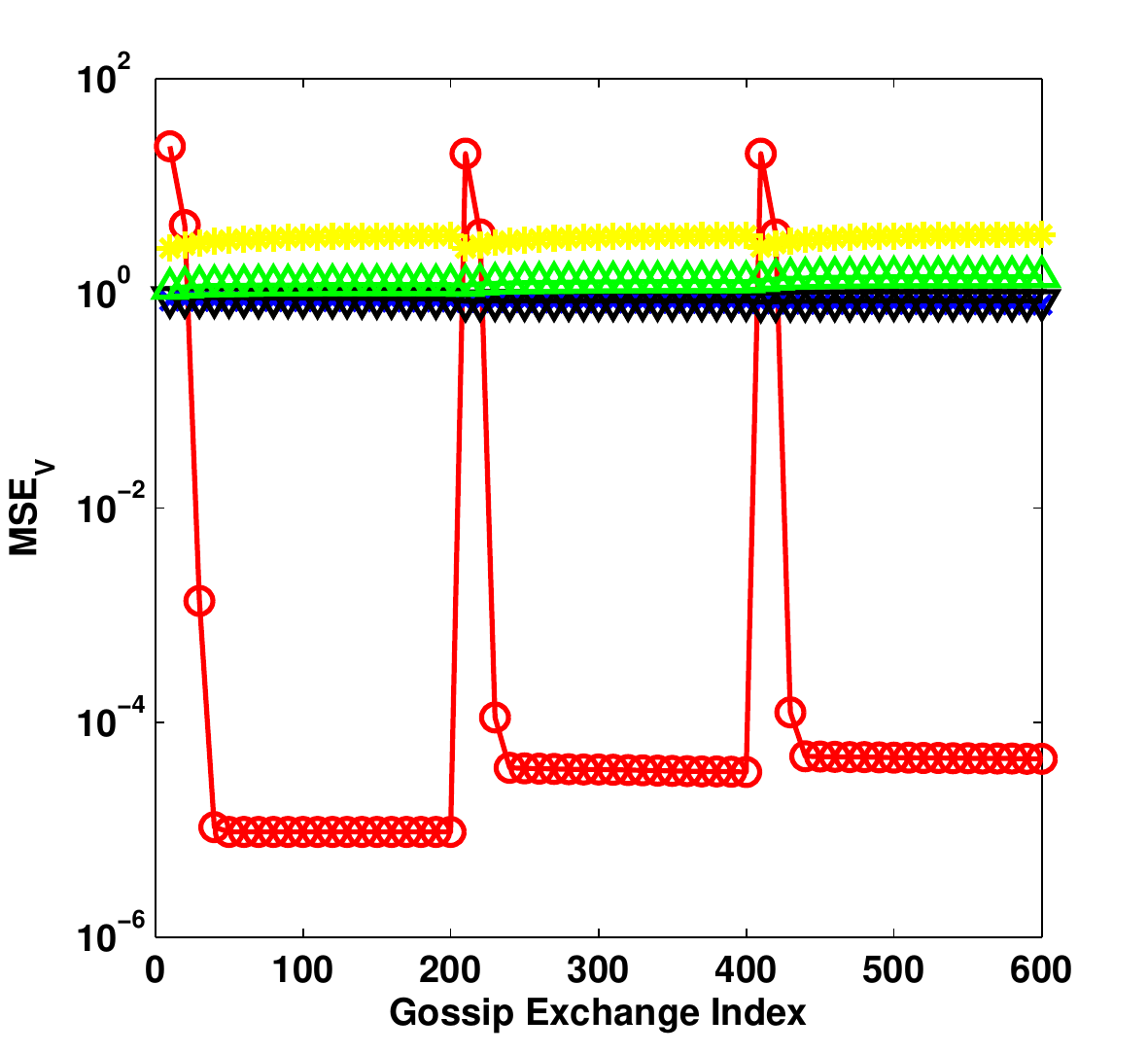}\label{fig.diff_Vm_dynamic}}}}\\
\vspace{-0.2cm}
{\subfigure[][$\mathrm{MSE}_\Theta^{(k)}$]{\resizebox{0.43\textwidth}{!}{\includegraphics{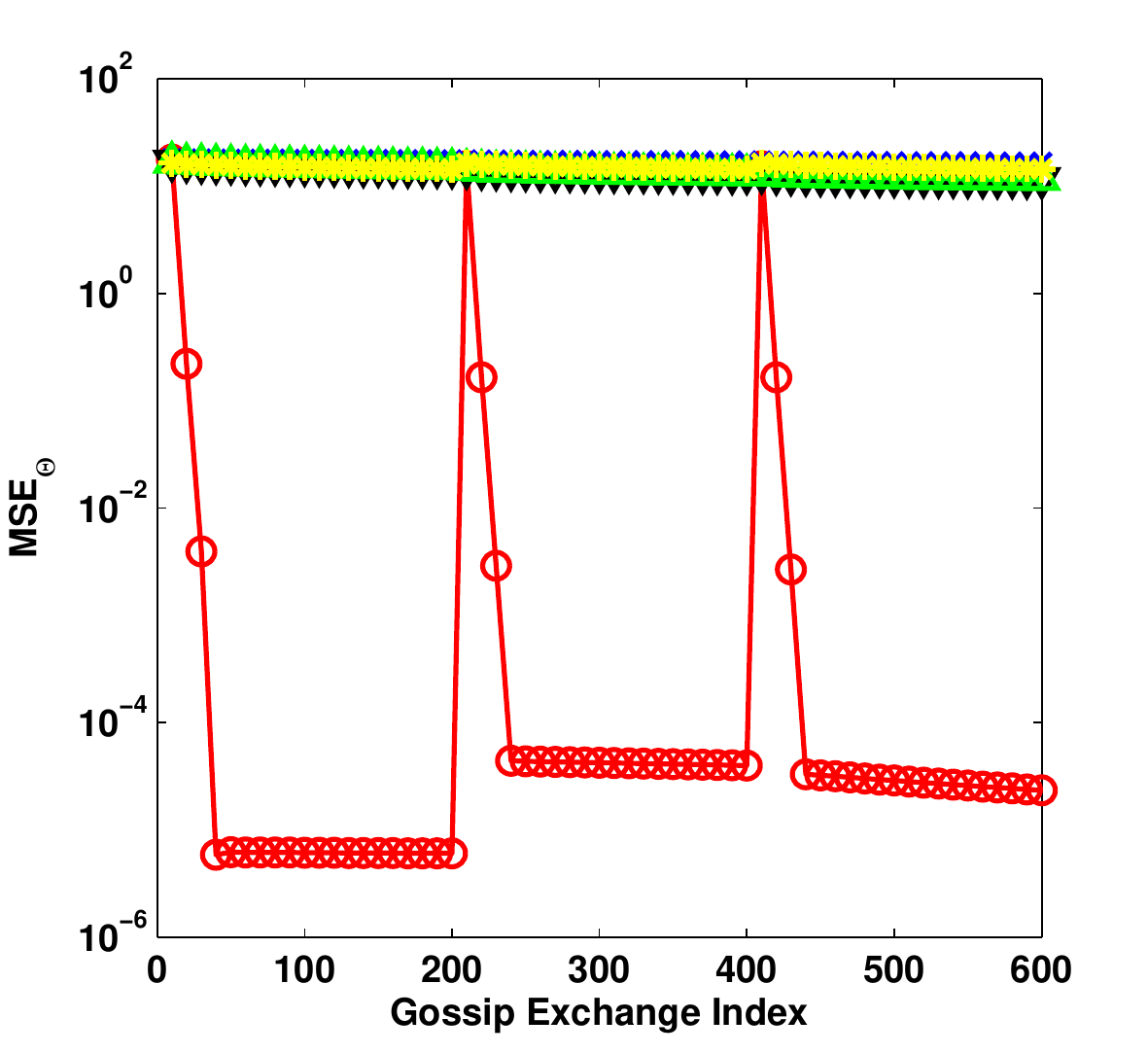}\label{fig.diff_Va_dynamic}}}}
\end{center}
\vspace{-0.4cm}
\caption{Comparison between DARSE and diffusion algorithms in \cite{xie2012fully,kar2008distributed} using $\ell_\star=3$ exchanges for every update.}
\vspace{-0.5cm}
\end{figure}

The reason for the fast convergence of the GGN algorithm is intrinsically that the GGN algorithm achieves the convergence rate of the GN algorithm, which converges quadratically when the noise level is low, while the diffusion algorithm is a first order sub-gradient method that converges sub-linearly. The fast convergence is partly due to the difference in computation complexity, where our algorithm is dominated by the matrix inversion on the order of $\mathcal{O}(N^3)$, while the diffusion algorithm scales like $\mathcal{O}(N)$. This requires the local processor to have the capability to maintain such computations on time for each exchange. There is literature on reducing the computation cost for matrix inversions, but this is beyond the scope of this paper.

\subsection{Comparison with Centralized GN Approach}
In this simulation, we added random outliers errors with variances $\epsilon_{i,m}[t]=100\sigma^2$ on $25$ randomly selected measurements for $6$ measurement snapshots. We examine the MSE performance of the DARSE scheme where,  in each snapshot $t$, each agent talks to another agent $20$ times on average during the interval $[\tau_k,\tau_{k+1})$ for all $k=1,\cdots,20$. In this case, the network communication volume is on the order of the network diameter $\mathcal{O}(N)$, which implies the number of transmissions in the centralized scheme as if the local measurements are relayed and routed through the entire network.
Furthermore, we examine the performance of the DARSE for cases with random link failures, where any established link $\{i,j\}\in\mathcal{M}$ fails with probability $p=0.1$ independently. It is clear that this communication model with link failures may not satisfy Condition \ref{connectivity_frequency}, \ref{lipshitz} \& \ref{cond_gossip_exchange}, but the numerical results show that our approach is robust and degrades gracefully.

To demonstrate the effectiveness of the DARSE scheme with bad data, we compare it with the centralized GN procedure with and without bad data, where the situation without bad data serves as the ultimate benchmark. Clearly, it can be seen from Fig. \ref{fig.bad_data_for_both} to \ref{fig.bad_data_Va} that DARSE sometimes has a certain performance loss compared with the centralized GN without bad data. Sometimes DARSE outperforms the centralized GN algorithm because the re-weighting numerically leads to certain improvement, but this is due to the fact that the measurements with greater variance influence less the state estimates rather than an intrinsic behavior of the algorithm. On the other hand, when bad data are present, the DARSE scheme outperforms significantly the centralized GN approach without re-weighting, thanks to the bad data suppression.

\begin{figure}[!t]
\begin{center}
{\subfigure[][$\mathrm{Val}_k$]{\resizebox{0.42\textwidth}{!}{\includegraphics{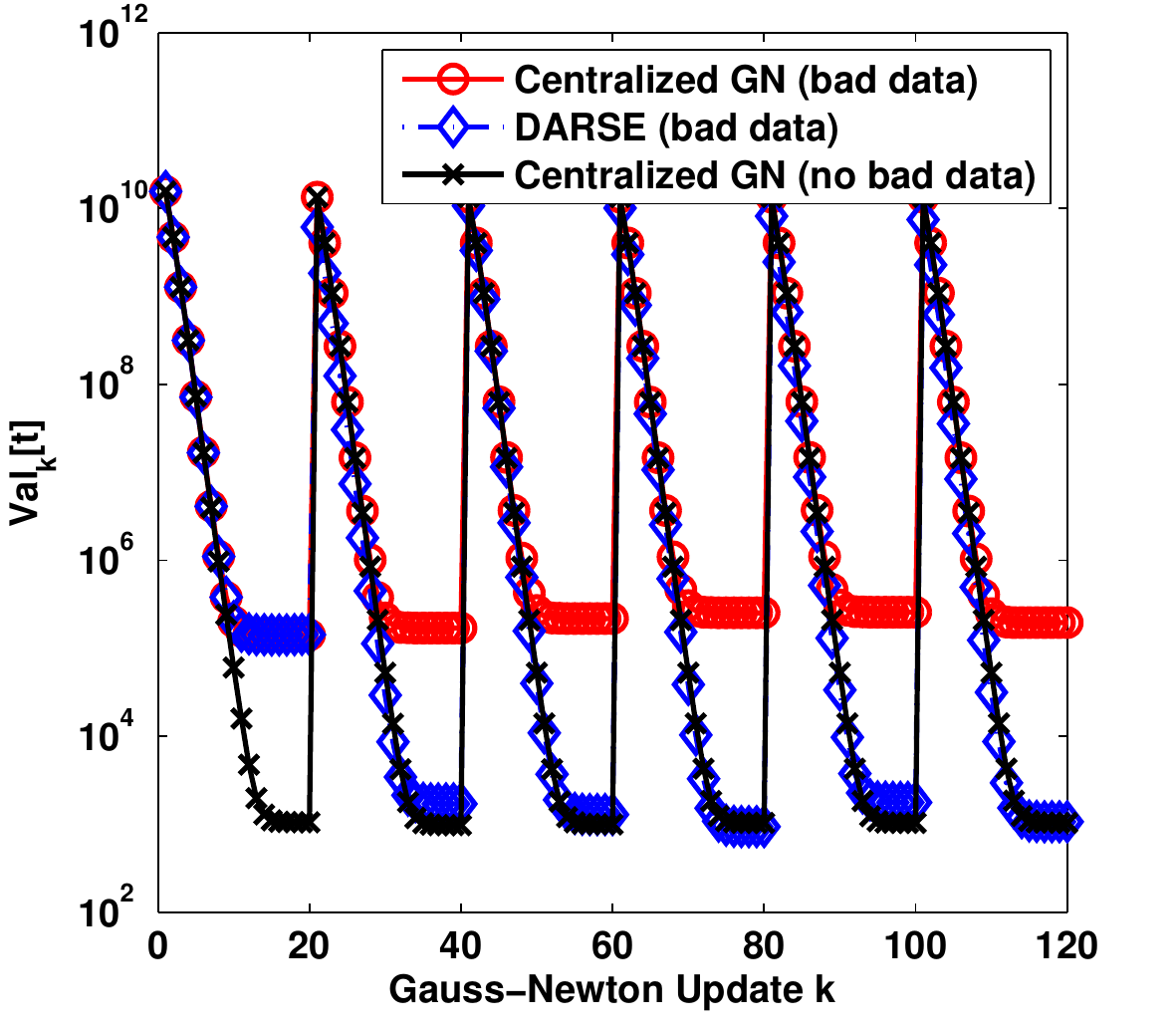}\label{fig.bad_data_for_both}}}}\\
{\subfigure[][$\mathrm{MSE}_V^{(k)}$]{\resizebox{0.42\textwidth}{!}{\includegraphics{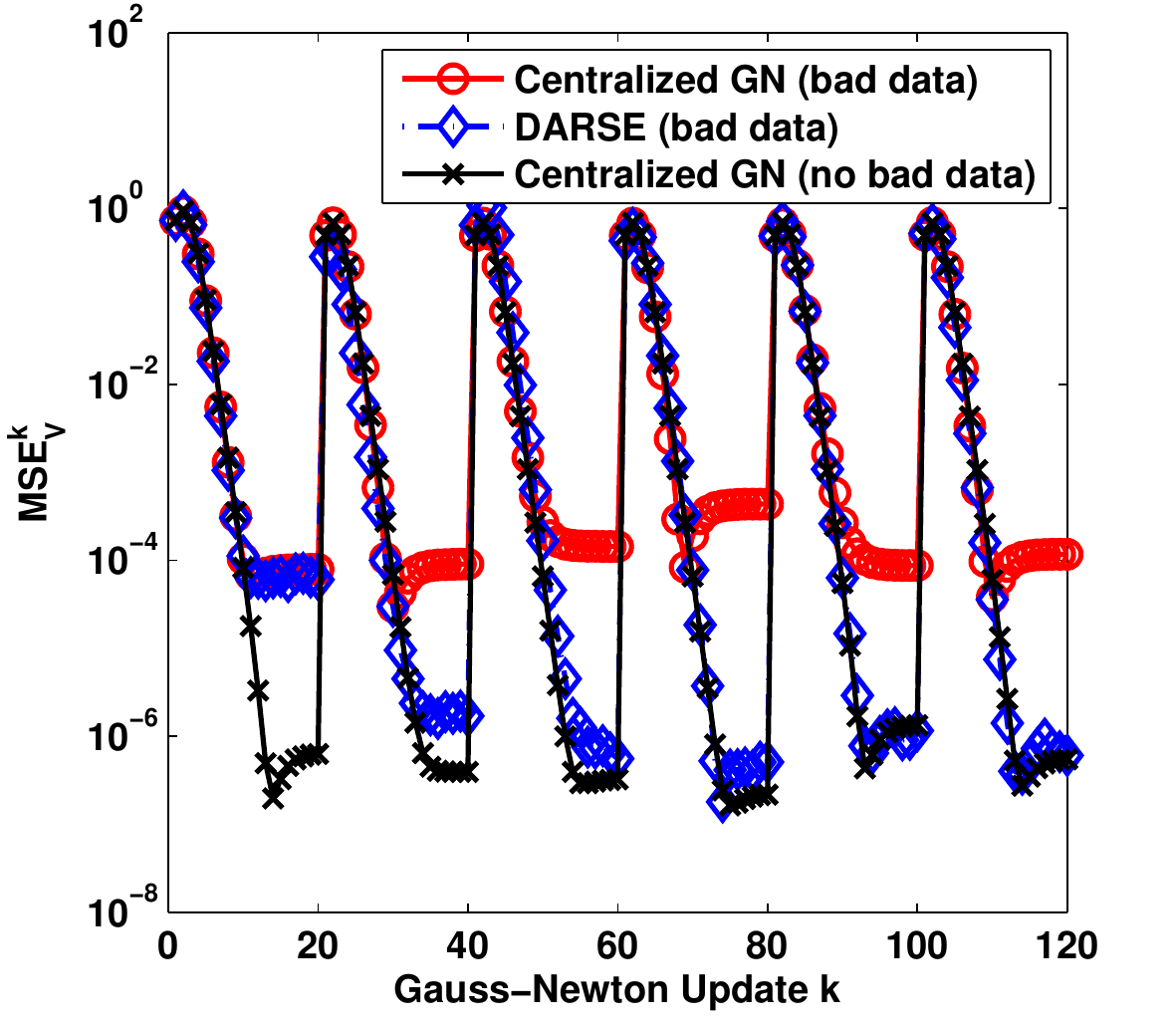}\label{fig.bad_data_Vm}}}}\\
{\subfigure[][$\mathrm{MSE}_\Theta^{(k)}$]{\resizebox{0.42\textwidth}{!}{\includegraphics{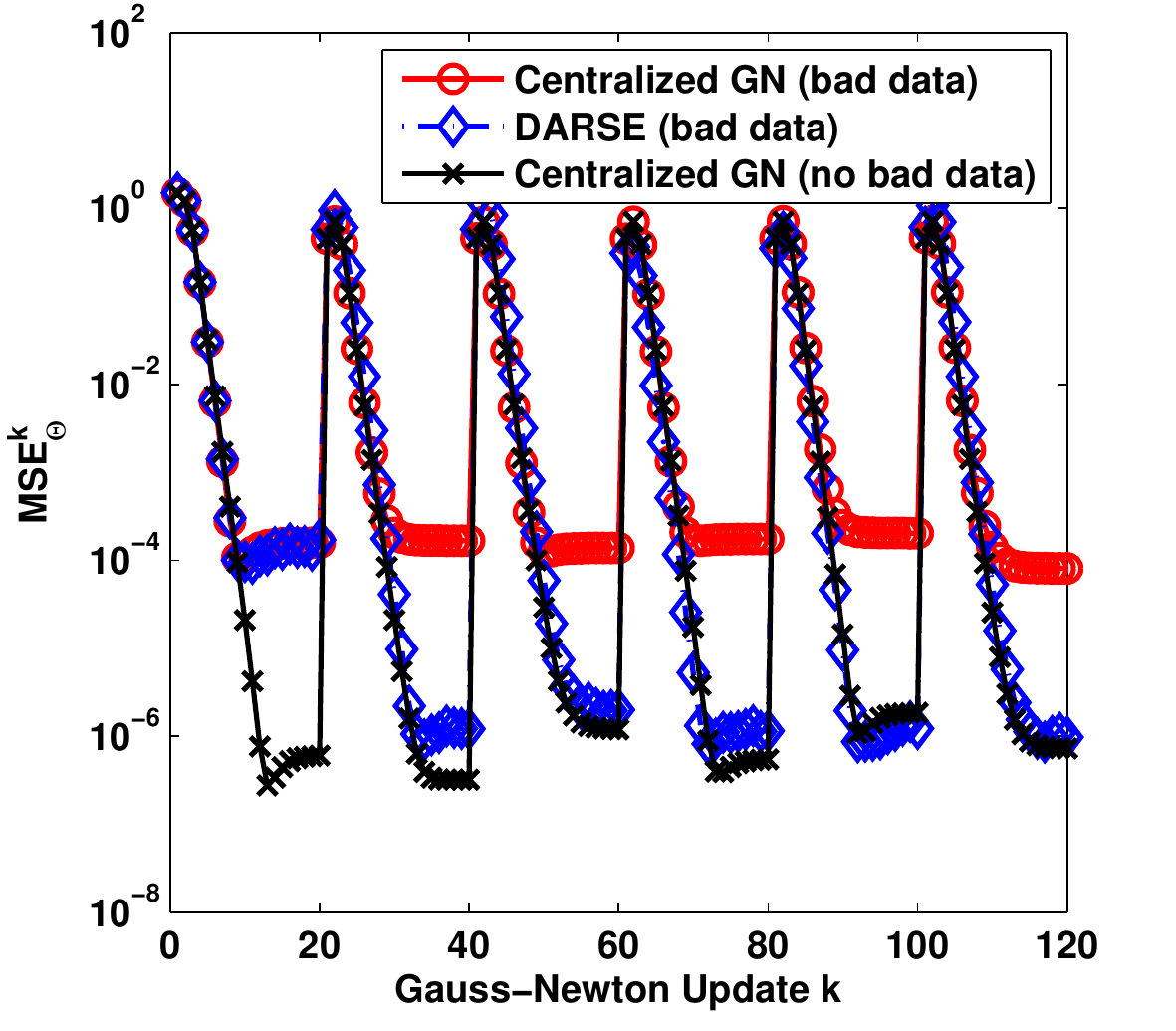}\label{fig.bad_data_Va}}}}
\end{center}
\caption{Performance of DARSE against centralized GN with and without bad data in terms of $\mathrm{Val}_k$, $\mathrm{MSE}_V^{(k)}$ and $\mathrm{MSE}_\Theta^{(k)}$}
\end{figure}

\section{Conclusions}
In this paper, we propose a DARSE scheme for hybrid power system state estimation integrating seamlessly WAMS an SCADA measurement system, which adaptively estimates the global state vector along with an updated noise covariance. The numerical results show that the DARSE scheme is able to deliver accurate estimates of the entire state vector at each distributed area, even in the presence of bad data and random communication link failures.

\appendices
\section{Power Flow Equations and Jacobian Matrix}\label{power_flow_equations_appendix}

Each line is characterized by the admittance matrix $\mathbf{Y}=[-Y_{nm}]_{N\times N}$, which includes line admittances $Y_{nm}=G_{nm}+\mathrm{i}B_{nm}$, $\{n,m\}\in\mathcal{E}$, shunt admittances $\bar{Y}_{nm}=\bar{G}_{nm}+\mathrm{i}\bar{B}_{nm}$ in the $\Pi$-model of line $\{n,m\}\in\mathcal{E}$, and self-admittance $Y_{nn}= - \sum_{m\neq n} (\bar{Y}_{nm} + Y_{nm})$. Using the canonical basis $\mathbf{e}_n=[0,\cdots,1,\cdots,0]^T$ and $\mathbf{Y}$, we define the following
\begin{align*}
	\mathbf{Y}_n &\triangleq \mathbf{e}_n\mathbf{e}_n^T\mathbf{Y},
	\quad
	\mathbf{Y}_{nm} &\triangleq (Y_{nm}+\bar{Y}_{nm})\mathbf{e}_n\mathbf{e}_n^T - Y_{nm}\mathbf{e}_n\mathbf{e}_m^T.
\end{align*}
Letting $\mathbf{G}_n = \Re\{\mathbf{Y}_n\}$, $\mathbf{B}_n=\Im\{\mathbf{Y}_n\}$, $\mathbf{G}_{nm}=\Re\{\mathbf{Y}_{nm}\}$ and $\mathbf{B}_{nm}=\Im\{\mathbf{Y}_{nm}\}$, we define the following matrices
\begin{align*}
    \mathbf{N}_{P,n}
    &\triangleq
    \begin{bmatrix}
        \mathbf{G}_n & - \mathbf{B}_n \\
        \mathbf{B}_n & \mathbf{G}_n
    \end{bmatrix}~~~~~~
    \mathbf{N}_{Q,n}
    \triangleq
    -
    \begin{bmatrix}
        \mathbf{B}_n & \mathbf{G}_n\\
        -\mathbf{G}_n & \mathbf{B}_n
    \end{bmatrix}\\
    \mathbf{E}_{P,nm}
    &\triangleq
    \begin{bmatrix}
        \mathbf{G}_{nm} &  - \mathbf{B}_{nm} \\
        \mathbf{B}_{nm}  & \mathbf{G}_{nm}
    \end{bmatrix}~
    \mathbf{E}_{Q,nm}
    \triangleq
    -
    \begin{bmatrix}
        \mathbf{B}_{nm}  & \mathbf{G}_{nm}\\
        -\mathbf{G}_{nm} & \mathbf{B}_{nm}
    \end{bmatrix}\\
    \mathbf{C}_{I,nm}
    &\triangleq
    \begin{bmatrix}
        \mathbf{G}_{nm}  & \mathbf{0}\\
        \mathbf{0}  & -\mathbf{B}_{nm}
    \end{bmatrix}~~
    \mathbf{C}_{J,nm}
    \triangleq
    ~
    \begin{bmatrix}
        \mathbf{B}_{nm}  & \mathbf{0}\\
        \mathbf{0}  & \mathbf{G}_{nm}
    \end{bmatrix}.
\end{align*}

State estimation mainly uses power injection and flow measurements from SCADA systems or, if available, PMU measurements from WAMS. Traditional SCADA systems aggregate data from the so called Remote Terminal Units (RTU), refreshing the data every $T_{\textrm{\tiny SCADA}} = 2$ to $5$ seconds and collects active/reactive injection $(P_n,Q_n)$ at bus $n$ and flow $(P_{nm},Q_{nm})$ at bus $n$ on line $\{n,m\}$
		\begin{align}
			P_n
			&=  \mathbf{v}^T\mathbf{N}_{P,n}\mathbf{v},\quad
			P_{nm}
			= \mathbf{v}^T\mathbf{E}_{P,nm}\mathbf{v}\label{power_flow_eq_SCADA_power_inj}\\
			Q_n
			&=  \mathbf{v}^T\mathbf{N}_{Q,n}\mathbf{v},\quad
			Q_{nm}
			= \mathbf{v}^T\mathbf{E}_{Q,nm}\mathbf{v},\label{power_flow_eq_SCADA_power_flow}
		\end{align}
and stack them in the power flow equations
		\begin{align}
			\mathbf{f}_{\mathcal{I}}(\mathbf{v}) &= [\cdots,P_n, \cdots, \cdots, Q_n, \cdots]^T\\
			\mathbf{f}_{\mathcal{F}}(\mathbf{v}) &= [\cdots,P_{nm},\cdots,\cdots,Q_{nm}, \cdots]^T.
		\end{align}
The WAMS generate data at a much faster pace compared to SCADA systems, with $T_{\textrm{\tiny PMU}} = 1/120$ to $1/30$ second. The PMU data are gathered at Phasor Data Concentrators (PDC), which collects the voltage $(\Re\{V_n\},\Im\{V_n\})$ at bus $n$ and the current $(I_{nm},J_{nm})$ on line $\{n,m\}$ measured at bus $n$
		\begin{align}\label{power_flow_eq_PMU}
			I_{nm}
			&= \left(\mathbf{1}_2\otimes\mathbf{e}_n\right)^T\mathbf{C}_{I,nm}\mathbf{v}\\
			J_{nm}
			&= \left(\mathbf{1}_2\otimes\mathbf{e}_n\right)^T\mathbf{C}_{J,nm}\mathbf{v},
		\end{align}
where $\otimes$ is the Kronecker product, and stacks them as
\begin{align}
	\mathbf{f}_{\mathcal{V}}(\mathbf{v}) = \mathbf{v},~\mathbf{f}_{\mathcal{C}}(\mathbf{v}) = [\cdots,I_{nm},\cdots,\cdots,J_{nm}, \cdots]^T.
\end{align}
The Jacobian $\mathbf{F}(\mathbf{v})$ can be derived from \eqref{power_flow_eq_PMU}, \eqref{power_flow_eq_SCADA_power_inj} and \eqref{power_flow_eq_SCADA_power_flow}
\begin{align}\label{Jacobian}
    \mathbf{F}(\mathbf{v})
    &=
    \begin{bmatrix}
    		\mathbf{I}_{2N}\\
		\mathbf{H}_{\mathcal{C}}\\
		(\mathbf{I}_{2N}\otimes\mathbf{v})^T\mathbf{H}_{\mathcal{I}}\\
		(\mathbf{I}_{4E}\otimes\mathbf{v})^T\mathbf{H}_{\mathcal{F}}
    \end{bmatrix}^T
\end{align}
where
\begin{align*}
	\mathbf{H}_{\mathcal{I}} &\triangleq [ \cdots,\mathbf{N}_{P,n}+\mathbf{N}_{P,n}^T,\cdots,\mathbf{N}_{Q,n}+\mathbf{N}_{Q,n}^T,\cdots]^T\\
	\mathbf{H}_{\mathcal{F}} &\triangleq[\cdots,\mathbf{E}_{P,nm}+\mathbf{E}_{P,nm}^T,\cdots,\mathbf{E}_{Q,nm}+\mathbf{E}_{Q,nm}^T,\cdots]^T\\	
	\mathbf{H}_{\mathcal{C}} &\triangleq [\cdots,\mathbf{H}_{I,n}^T,\cdots,\cdots,\mathbf{H}_{J,n}^T,\cdots]^T
\end{align*}
using $\mathbf{S}_n \triangleq \mathbf{I}_{E_n}\otimes\left(\mathbf{1}_2\otimes\mathbf{e}_n\right)^T$ with $E_n$ being the number of incident lines at bus $n$ and
\begin{align}
	\mathbf{H}_{I,n} &\triangleq \mathbf{S}_n \mathbf{C}_{I,n}, \quad  \mathbf{C}_{I,n} \triangleq [\cdots,\mathbf{C}_{I,nm}^T,\cdots]^T\\
	\mathbf{H}_{J,n} &\triangleq \mathbf{S}_n \mathbf{C}_{J,n},\quad \mathbf{C}_{J,n} \triangleq [\cdots,\mathbf{C}_{J,nm}^T,\cdots]^T.\nonumber
\end{align}

\section{Proof of Lemma 1}\label{proof_lem_Jacobian_lipschitz}

The $F$-norm inequality $\|\cdot\|\leq \|\cdot\|_F$ gives us
\begin{align}
	 \|\mathbf{\tilde{F}}_i(\mathbf{v})-\mathbf{\tilde{F}}_i(\mathbf{v}')\|^2\leq\|\mathbf{\tilde{F}}_i(\mathbf{v})-\mathbf{\tilde{F}}_i(\mathbf{v}')\|_F^2
\end{align}	
for all $i$. Since $\mathbf{\tilde{F}}_i(\mathbf{v})=\bdsb{\Gamma}_i^{-{1}/{2}}\mathbf{T}_i\mathbf{F}(\mathbf{v})$, we further use the multiplicative norm inequality $\left\|\mathbf{A}\mathbf{B}\right\|_F\leq \left\|\mathbf{A}\right\|_F\left\|\mathbf{B}\right\|_F$
\begin{align}
	\left\|\mathbf{\tilde{F}}_i(\mathbf{v})-\mathbf{\tilde{F}}_i(\mathbf{v}')\right\|_F^2
	&= \left\|\bdsb{\Gamma}_i^{-1/2}\mathbf{T}_i\left[\mathbf{F}(\mathbf{v})-\mathbf{F}(\mathbf{v}')\right]\right\|_F^2\\
	&\leq \lambda_{\min}^{-1}(\bdsb{\Gamma}_i)\left\|\mathbf{F}(\mathbf{v})-\mathbf{F}(\mathbf{v}')\right\|_F^2.
\end{align}	
From \eqref{Jacobian}, we have
\begin{align}
	\mathbf{F}(\mathbf{v})-\mathbf{F}(\mathbf{v}')
	=
	\begin{bmatrix}
		\mathbf{0}_{2N\times 2N}\\
		\mathbf{0}_{4E\times 2N}\\
		 \left[\mathbf{I}_{2N}\otimes(\mathbf{v}-\mathbf{v}')^T\right]\mathbf{H}_{\mathcal{I}}\\
		\left[\mathbf{I}_{4E}\otimes(\mathbf{v}-\mathbf{v}')^T\right]\mathbf{H}_{\mathcal{F}}
	\end{bmatrix}.
\end{align}
According to the $F$-norm definition $\left\|\mathbf{A}\right\|_F^2 = \mathrm{Tr}\left(\mathbf{A}^T\mathbf{A}\right)$ and the properties of the trace operator, we have for any $\mathbf{v}, \mathbf{v}'$
\begin{align*}
     \left\|\mathbf{F}(\mathbf{v})-\mathbf{F}(\mathbf{v}')\right\|_F^2
     &=
     \mathrm{Tr}\left[\left(\mathbf{I}_{2N}\otimes(\mathbf{v}-\mathbf{v}')(\mathbf{v}-\mathbf{v}')^T\right)\mathbf{H}_{\mathcal{I}}\mathbf{H}_{\mathcal{I}}^T\right]
     \\
     &~~~+
     \mathrm{Tr}\left[\left(\mathbf{I}_{4E}\otimes(\mathbf{v}-\mathbf{v}')(\mathbf{v}-\mathbf{v}')^T\right)\mathbf{H}_{\mathcal{F}}\mathbf{H}_{\mathcal{F}}^T\right].
\end{align*}
Expanding $\mathbf{H}_{\mathcal{I}}$ and $\mathbf{H}_{\mathcal{F}}$ in \eqref{Jacobian} and using their symmetric properties, we have
\begin{align}
	\left\|\mathbf{F}(\mathbf{v})-\mathbf{F}(\mathbf{v}')\right\|_F^2= (\mathbf{v}-\mathbf{v}')^T\mathbf{M}(\mathbf{v}-\mathbf{v}'),
\end{align}
where $\mathbf{M}=\mathbf{H}_{\mathcal{I}}^T\mathbf{H}_{\mathcal{I}}+\mathbf{H}_{\mathcal{F}}^T\mathbf{H}_{\mathcal{F}}$. It is well-known that any quadratic form of a symmetric matrix can be bounded as
\begin{align}\label{quadratic_form}
	 (\mathbf{v}-\mathbf{v}')^T\mathbf{M}(\mathbf{v}-\mathbf{v}')	\leq  \|\mathbf{M}\|\left\|\mathbf{v}-\mathbf{v}'\right\|^2.
\end{align}
The result follows by setting $\omega=\max_i~\sqrt{\|\mathbf{M}\|\lambda_{\min}^{-1}(\bdsb{\Gamma}_i)}$.

\bibliographystyle{IEEEtran}

\bibliography{E:/Dropbox/Shared/Simon_Anna/ref_general,E:/Dropbox/Shared/Simon_Anna/ref_power_system_SE,E:/Dropbox/Shared/Simon_Anna/ref_dist_opt,E:/Dropbox/Shared/Simon_Anna/ref_opt_PMU_placement,E:/Dropbox/Shared/Simon_Anna/ref_data_management}

\begin{thebibliography}{10}
\providecommand{\url}[1]{#1}
\csname url@samestyle\endcsname
\providecommand{\newblock}{\relax}
\providecommand{\bibinfo}[2]{#2}
\providecommand{\BIBentrySTDinterwordspacing}{\spaceskip=0pt\relax}
\providecommand{\BIBentryALTinterwordstretchfactor}{4}
\providecommand{\BIBentryALTinterwordspacing}{\spaceskip=\fontdimen2\font plus
\BIBentryALTinterwordstretchfactor\fontdimen3\font minus
  \fontdimen4\font\relax}
\providecommand{\BIBforeignlanguage}[2]{{%
\expandafter\ifx\csname l@#1\endcsname\relax
\typeout{** WARNING: IEEEtran.bst: No hyphenation pattern has been}%
\typeout{** loaded for the language `#1'. Using the pattern for}%
\typeout{** the default language instead.}%
\else
\language=\csname l@#1\endcsname
\fi
#2}}
\providecommand{\BIBdecl}{\relax}
\BIBdecl

\bibitem{li2012decentralized}
X.~Li, Z.~Wang, and A.~Scaglione, ``{D}ecentralized {D}ata {P}rocessing and
  {M}anagement in {S}mart {G}rid via {G}ossiping,'' in \emph{Sensor Array and
  Multichannel Signal Processing Workshop (SAM), 2012 IEEE 7th}.\hskip 1em plus
  0.5em minus 0.4em\relax IEEE, 2012, pp. 1--4.

\bibitem{li2013robust}
X.~Li and A.~Scaglione, ``{Robust Collaborative State Estimation for Smart Grid
  Monitoring},'' \emph{Acoustics, Speech and Signal Processing, 2013. ICASSP
  2013 Proceedings. 2013 IEEE International Conference on}.

\bibitem{schweppe1974static}
F.~Schweppe and E.~Handschin, ``{S}tatic {S}tate {E}stimation in {E}lectric
  {P}ower {S}ystems,'' \emph{Proceedings of the IEEE}, vol.~62, no.~7, pp.
  972--982, 1974.

\bibitem{yang2011transition}
T.~Yang, H.~Sun, and A.~Bose, ``{T}ransition to a {T}wo-{L}evel {L}inear
  {S}tate {E}stimator : {P}art i \& ii,'' \emph{IEEE Trans. Power Syst.},
  no.~99, pp. 1--1, 2011.

\bibitem{phadke1986state}
A.~Phadke, J.~Thorp, and K.~Karimi, ``{S}tate {E}stimation with {P}hasor
  {M}easurements,'' \emph{Power Engineering Review, IEEE}, no.~2, pp. 48--48,
  1986.

\bibitem{zivanovic1996implementation}
R.~Zivanovic and C.~Cairns, ``{I}mplementation of {PMU} {T}echnology in {S}tate
  {E}stimation: an {O}verview,'' in \emph{AFRICON, 1996., IEEE AFRICON 4th},
  vol.~2.\hskip 1em plus 0.5em minus 0.4em\relax IEEE, 1996, pp. 1006--1011.

\bibitem{meliopoulos2010supercalibrator}
A.~Meliopoulos, G.~Cokkinides, C.~Hedrington, and T.~Conrad, ``{T}he
  {S}upercalibrator-{A} {F}ully {D}istributed {S}tate {E}stimator,'' in
  \emph{Power and Energy Society General Meeting, 2010 IEEE}.\hskip 1em plus
  0.5em minus 0.4em\relax IEEE, 2010, pp. 1--8.

\bibitem{qin2007hybrid}
X.~Qin, T.~Bi, and Q.~Yang, ``{H}ybrid {N}on-linear {S}tate {E}stimation with
  {V}oltage {P}hasor {M}easurements,'' in \emph{Power Engineering Society
  General Meeting, 2007. IEEE}.\hskip 1em plus 0.5em minus 0.4em\relax IEEE,
  2007, pp. 1--6.

\bibitem{nuqui2007hybrid}
R.~Nuqui and A.~Phadke, ``{H}ybrid {L}inear {S}tate {E}stimation {U}tilizing
  {S}ynchronized {P}hasor {M}easurements,'' in \emph{Power Tech, 2007 IEEE
  Lausanne}.\hskip 1em plus 0.5em minus 0.4em\relax IEEE, 2007, pp. 1665--1669.

\bibitem{chakrabarti2010comparative}
S.~Chakrabarti, E.~Kyriakides, G.~Ledwich, and A.~Ghosh, ``{A} {C}omparative
  {S}tudy of the {M}ethods of {I}nclusion of {PMU} {C}urrent {P}hasor
  {M}easurements in a {H}ybrid {S}tate {E}stimator,'' in \emph{Power and Energy
  Society General Meeting, 2010 IEEE}.\hskip 1em plus 0.5em minus 0.4em\relax
  IEEE, 2010, pp. 1--7.

\bibitem{avila2009recent}
R.~Avila-Rosales, M.~Rice, J.~Giri, L.~Beard, and F.~Galvan, ``{R}ecent
  {E}xperience with a {H}ybrid {SCADA/PMU} {O}nline {S}tate {E}stimator,'' in
  \emph{Power \& Energy Society General Meeting, 2009}.\hskip 1em plus 0.5em
  minus 0.4em\relax IEEE, 2009, pp. 1--8.

\bibitem{brice1982multiprocessor}
C.~Brice and R.~Cavin, ``{M}ultiprocessor {S}tatic {S}tate {E}stimation,''
  \emph{IEEE Trans. Power App. Syst.}, no.~2, pp. 302--308, 1982.

\bibitem{kurzyn1983real}
M.~Kurzyn, ``{R}eal-{T}ime {S}tate {E}stimation for {L}arge-{S}cale {P}ower
  {S}ystems,'' \emph{IEEE Trans. Power App. Syst.}, no.~7, pp. 2055--2063,
  1983.

\bibitem{gomez2011multilevel}
A.~G{\'o}mez-Exp{\'o}sito, A.~Abur, A.~de~la Villa~Ja{\'e}n, and
  C.~G{\'o}mez-Quiles, ``{A} {M}ultilevel {S}tate {E}stimation {P}aradigm for
  {S}mart {G}rids,'' \emph{Proceedings of the IEEE}, no.~99, pp. 1--25, 2011.

\bibitem{falcao1995parallel}
D.~Falcao, F.~Wu, and L.~Murphy, ``{P}arallel and {D}istributed {S}tate
  {E}stimation,'' \emph{IEEE Trans. Power Syst.}, vol.~10, no.~2, pp. 724--730,
  1995.

\bibitem{lin1992distributed}
S.~Lin, ``{A} {D}istributed {S}tate {E}stimator for {E}lectric {P}ower
  {S}ystems,'' \emph{IEEE Trans. Power Syst.}, vol.~7, no.~2, pp. 551--557,
  1992.

\bibitem{ebrahimian2000state}
R.~Ebrahimian and R.~Baldick, ``{S}tate {E}stimation {D}istributed
  {P}rocessing,'' \emph{IEEE Trans. Power Syst.}, vol.~15, no.~4, pp.
  1240--1246, 2000.

\bibitem{van1981two}
T.~Van~Cutsem, J.~Horward, and M.~Ribbens-Pavella, ``{A} {T}wo-{L}evel {S}tatic
  {S}tate {E}stimator for {E}lectric {P}ower {S}ystems,'' \emph{IEEE Trans.
  Power App. Syst.}, no.~8, pp. 3722--3732, 1981.

\bibitem{zhao2005multi}
L.~Zhao and A.~Abur, ``{M}ulti-area {S}tate {E}stimation using {S}ynchronized
  {P}hasor {M}easurements,'' \emph{IEEE Trans. Power Syst.}, vol.~20, no.~2,
  pp. 611--617, 2005.

\bibitem{jiang2007distributed}
W.~Jiang, V.~Vittal, and G.~Heydt, ``{A} {D}istributed {S}tate {E}stimator
  {U}tilizing {S}ynchronized {P}hasor {M}easurements,'' \emph{IEEE Trans. Power
  Syst.}, vol.~22, no.~2, pp. 563--571, 2007.

\bibitem{xie2012fully}
L.~Xie, D.~Choi, S.~Kar, and H.~Poor, ``{F}ully {D}istributed {S}tate
  {E}stimation for {W}ide-{A}rea {M}onitoring {S}ystems,'' \emph{Smart Grid,
  IEEE Transactions on}, vol.~3, no.~3, pp. 1154--1169, 2012.

\bibitem{kekatos2012distributed}
V.~Kekatos and G.~Giannakis, ``{D}istributed {R}obust {P}ower {S}ystem {S}tate
  {E}stimation,'' \emph{Arxiv preprint arXiv:1204.0991}, 2012.

\bibitem{lopes2008diffusion}
C.~Lopes and A.~Sayed, ``{D}iffusion {L}east-{M}ean {S}quares over {A}daptive
  {N}etworks: {F}ormulation and {P}erformance {A}nalysis,'' \emph{IEEE Trans.
  Signal Process.}, vol.~56, no.~7, pp. 3122--3136, 2008.

\bibitem{nedic2009distributed}
A.~Nedic and A.~Ozdaglar, ``{D}istributed {S}ubgradient {M}ethods for
  {M}ulti-agent {O}ptimization,'' \emph{Automatic Control, IEEE Transactions
  on}, vol.~54, no.~1, pp. 48--61, 2009.

\bibitem{kar2008distributed}
S.~Kar, J.~Moura, and K.~Ramanan, ``{Distributed Parameter Estimation in Sensor
  Networks: Nonlinear Observation Models and Imperfect Communication},''
  \emph{Information Theory, IEEE Transactions on}, vol.~58, no.~6, pp. 3575
  --3605, june 2012.

\bibitem{van1985bad}
T.~Van~Cutsem, M.~Ribbens-Pavella, and L.~Mili, ``{B}ad {D}ata {I}dentification
  {M}ethods in {P}ower {S}ystem {S}tate {E}stimation-{A} {C}omparative
  {S}tudy,'' \emph{Power Apparatus and Systems, IEEE Transactions on}, no.~11,
  pp. 3037--3049, 1985.

\bibitem{garcia1979fast}
A.~Garcia, A.~Monticelli, and P.~Abreu, ``{F}ast {D}ecoupled {S}tate
  {E}stimation and {B}ad {D}ata {P}rocessing,'' \emph{Power Apparatus and
  Systems, IEEE Transactions on}, no.~5, pp. 1645--1652, 1979.

\bibitem{bobba2010detecting}
R.~Bobba, K.~Rogers, Q.~Wang, H.~Khurana, K.~Nahrstedt, and T.~Overbye,
  ``{D}etecting {F}alse {D}ata {I}njection {A}ttacks on {DC} {S}tate
  {E}stimation,'' in \emph{Preprints of the First Workshop on Secure Control
  Systems, CPSWEEK 2010}, 2010.

\bibitem{liu2011false}
Y.~Liu, P.~Ning, and M.~Reiter, ``{F}alse {D}ata {I}njection {A}ttacks against
  {S}tate {E}stimation in {E}lectric {P}ower {G}rids,'' \emph{ACM Transactions
  on Information and System Security (TISSEC)}, vol.~14, no.~1, p.~13, 2011.

\bibitem{chen2006placement}
J.~Chen and A.~Abur, ``{P}lacement of {PMU}s to {E}nable {B}ad {D}ata
  {D}etection in {S}tate {E}stimation,'' \emph{Power Systems, IEEE Transactions
  on}, vol.~21, no.~4, pp. 1608--1615, 2006.

\bibitem{giani2011smart}
A.~Giani, E.~Bitar, M.~Garcia, M.~McQueen, P.~Khargonekar, and K.~Poolla,
  ``{S}mart {G}rid {D}ata {I}ntegrity {A}ttacks: {C}haracterizations and
  {C}ountermeasures $\pi$,'' in \emph{Smart Grid Communications
  (SmartGridComm), 2011 IEEE International Conference on}.\hskip 1em plus 0.5em
  minus 0.4em\relax IEEE, 2011, pp. 232--237.

\bibitem{monticelli1999state}
A.~Monticelli, ``{S}tate {E}stimation in {E}lectric {P}ower {S}ystems: A
  {G}eneralized {A}pproach, 1999.''

\bibitem{choi2011fully}
D.~Choi and L.~Xie, ``Fully distributed bad data processing for wide area state
  estimation,'' in \emph{Smart Grid Communications (SmartGridComm), 2011 IEEE
  International Conference on}.\hskip 1em plus 0.5em minus 0.4em\relax IEEE,
  2011, pp. 546--551.

\bibitem{bin1994implementable}
S.~Bin and C.~Lin, ``{A}n {I}mplementable {D}istributed {S}tate {E}stimator and
  {D}istributed {B}ad {D}ata {P}rocessing {S}chemes for {E}lectric {P}ower
  {S}ystems,'' \emph{Power Systems, IEEE Transactions on}, vol.~9, no.~3, pp.
  1277--1284, 1994.

\bibitem{pasqualetti2011distributed}
F.~Pasqualetti, R.~Carli, and F.~Bullo, ``{A} {D}istributed {D}ethod for
  {S}tate {E}stimation and {F}alse {D}ata {D}etection in {P}ower {N}etworks,''
  in \emph{Smart Grid Communications (SmartGridComm), 2011 IEEE International
  Conference on}.\hskip 1em plus 0.5em minus 0.4em\relax IEEE, 2011, pp.
  469--474.

\bibitem{xu2011sparse}
W.~Xu, M.~Wang, and A.~Tang, ``{S}parse {R}ecovery from {N}onlinear
  {M}easurements with {A}pplications in {B}ad {D}ata {D}etection for {P}ower
  {N}etworks,'' \emph{arXiv preprint arXiv:1112.6234}, 2011.

\bibitem{merrill1971bad}
H.~Merrill and F.~Schweppe, ``{B}ad {D}ata {S}uppression in {P}ower {S}ystem
  {S}tatic {S}tate {E}stimation,'' \emph{Power Apparatus and Systems, IEEE
  Transactions on}, no.~6, pp. 2718--2725, 1971.

\bibitem{falcao1982power}
D.~Falc{\~a}o, P.~Cooke, and A.~Brameller, ``{P}ower {S}ystem {T}racking
  {S}tate {E}stimation and {B}ad {D}ata {P}rocessing,'' \emph{Power Apparatus
  and Systems, IEEE Transactions on}, no.~2, pp. 325--333, 1982.

\bibitem{lo1983development}
K.~Lo, P.~Ong, R.~McColl, A.~Moffatt, and J.~Sulley, ``{D}evelopment of a
  {S}tatic {S}tate {E}stimator {P}art i: {E}stimation and {B}ad {D}ata
  {S}uppression,'' \emph{Power Apparatus and Systems, IEEE Transactions on},
  no.~8, pp. 2486--2491, 1983.

\bibitem{handschin1975bad}
E.~Handschin, F.~Schweppe, J.~Kohlas, and A.~Fiechter, ``{B}ad {D}ata
  {A}nalysis for {P}ower {S}ystem {S}tate {E}stimation,'' \emph{Power Apparatus
  and Systems, IEEE Transactions on}, vol.~94, no.~2, pp. 329--337, 1975.

\bibitem{li2012convergence}
X.~Li and A.~Scaglione, ``{Convergence and Applications of a Gossip-based
  Gauss-Newton Algorithm},'' \emph{submitted to IEEE Trans. on Signal
  Processing, arXiv preprint arXiv:1210.0056}, 2012.

\bibitem{lavaei2010zero}
J.~Lavaei and S.~Low, ``{Z}ero {D}uality {G}ap in {O}ptimal {P}ower {F}low
  {P}roblem,'' \emph{IEEE Transactions on Power Systems}, 2010.

\bibitem{kayfundamentals}
S.~Kay, ``{F}undamentals of {S}tatistical {S}ignal {P}rocessing : {V}olume {I}
  \& {II}, 1993.''

\bibitem{tsitsiklis1984problems}
J.~Tsitsiklis, ``{P}roblems in {D}ecentralized {D}ecision {M}aking and
  {C}omputation.'' DTIC Document, Tech. Rep., 1984.

\bibitem{blondel2005convergence}
V.~Blondel, J.~Hendrickx, A.~Olshevsky, and J.~Tsitsiklis, ``{C}onvergence in
  {M}ultiagent {C}oordination, {C}onsensus, and {F}locking,'' in \emph{Decision
  and Control, 2005 and 2005 European Control Conference. CDC-ECC'05. 44th IEEE
  Conference on}.\hskip 1em plus 0.5em minus 0.4em\relax IEEE, 2005, pp.
  2996--3000.

\bibitem{boyd2006randomized}
S.~Boyd, A.~Ghosh, B.~Prabhakar, and D.~Shah, ``{R}andomized {G}ossip
  {A}lgorithms,'' \emph{IEEE Trans. Inf. Theory}, vol.~52, no.~6, pp.
  2508--2530, 2006.

\bibitem{clements1990observability}
K.~Clements, ``{O}bservability {M}ethods and {O}ptimal {M}eter {P}lacement,''
  \emph{International Journal of Electrical Power \& Energy Systems}, vol.~12,
  no.~2, pp. 88--93, 1990.

\bibitem{eriksson2004applied}
K.~Eriksson, D.~Estep, and C.~Johnson, \emph{{A}pplied {M}athematics, {B}ody
  and {S}oul: {D}erivates and {G}eometry in $\mathbb{R}^3$}.\hskip 1em plus
  0.5em minus 0.4em\relax Springer Verlag, 2004, vol.~3.

\bibitem{kekatos2011convex}
V.~Kekatos and G.~Giannakis, ``{A} {C}onvex {R}elaxation {A}pproach to
  {O}ptimal {P}lacement of {P}hasor {M}easurement {U}nits,'' in
  \emph{Computational Advances in Multi-Sensor Adaptive Processing (CAMSAP),
  2011 4th IEEE International Workshop on}.\hskip 1em plus 0.5em minus
  0.4em\relax IEEE, 2011, pp. 145--148.

\bibitem{li2011phasor}
Q.~Li, R.~Negi, and M.~Ilic, ``{Phasor Measurement Units Placement for Power
  System State Estimation: a Greedy Approach},'' in \emph{Power and Energy
  Society General Meeting, 2011 IEEE}.\hskip 1em plus 0.5em minus 0.4em\relax
  IEEE, 2011, pp. 1--8.

\bibitem{li2012information}
Q.~Li, T.~Cui, Y.~Weng, R.~Negi, F.~Franchetti, and M.~Ilic, ``{An
  Information-Theoretic Approach to PMU Placement in Electric Power Systems},''
  \emph{Arxiv preprint arXiv:1201.2934}, 2012.

\bibitem{UK_grid}
\BIBentryALTinterwordspacing
``{U}. {K}. {N}ational {G}rid-{R}eal {T}ime {O}perational {D}ata,'' 2009,
  [Online; accessed 22-July-2004]. [Online]. Available:
  \url{http://www.nationalgrid.com/uk/Electricity/Data/}
\BIBentrySTDinterwordspacing

\end{thebibliography}


\begin{IEEEbiography}[{\includegraphics[width=1in,height=1.25in,clip,keepaspectratio]{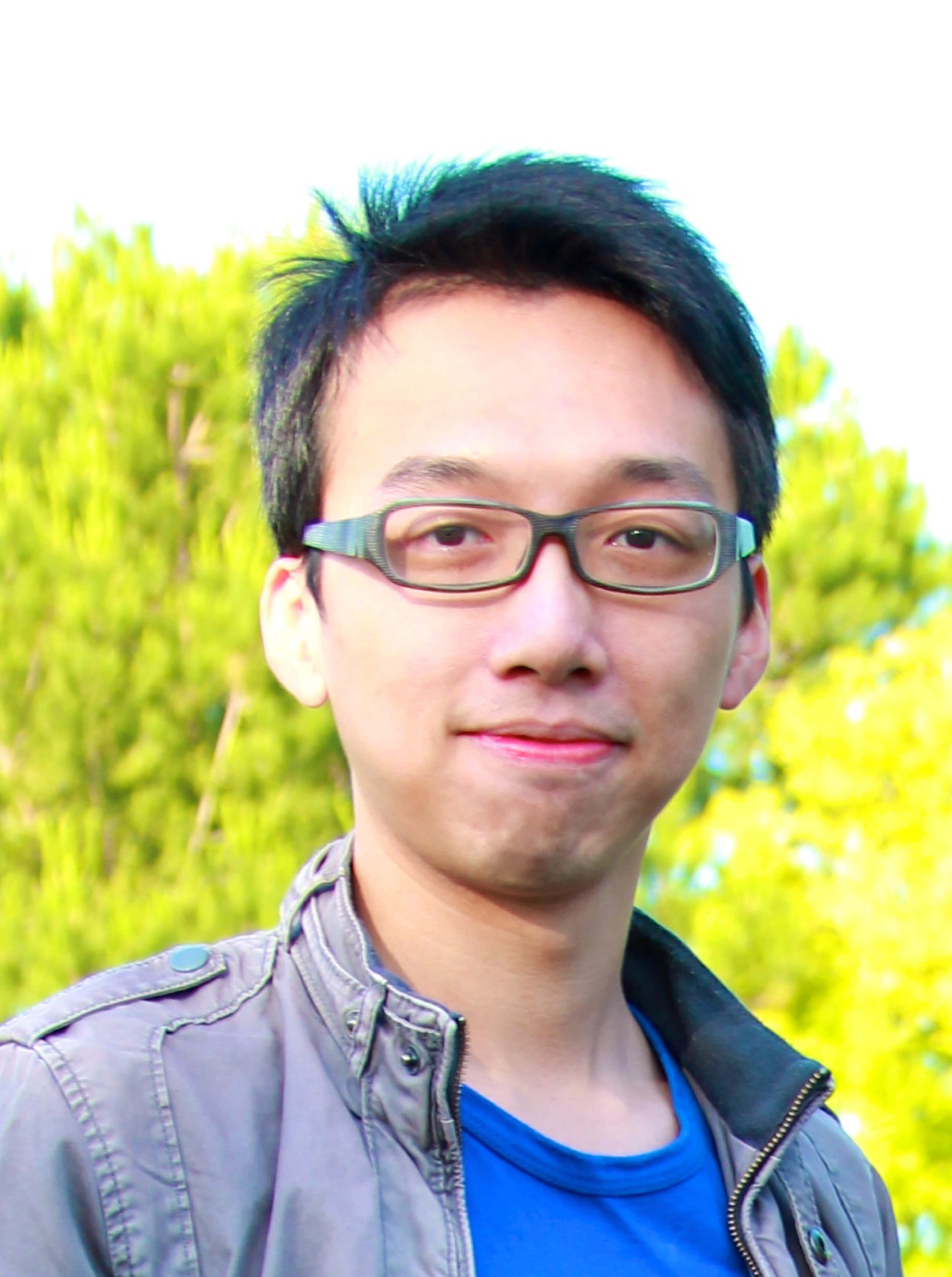}}]
{Xiao Li} has been pursuing the Ph.D. degree at the University of California, Davis, since 2009. He received the B.Eng. degree (2007) from Sun Yat-Sen (Zhongshan) University, China, and the M.Phil. degree (2009) from the University of Hong Kong. His research interests are in the theoretical and algorithmic studies in signal processing and optimizations, statistical learning and inferences for high dimensional data, distributed optimizations and adaptive algorithms, as well as their applications in communications, networked systems, and smart grid.
\end{IEEEbiography}

\vspace{-0.3cm}
\begin{IEEEbiography}[{\includegraphics[width=1in,height=1.25in,clip,keepaspectratio]{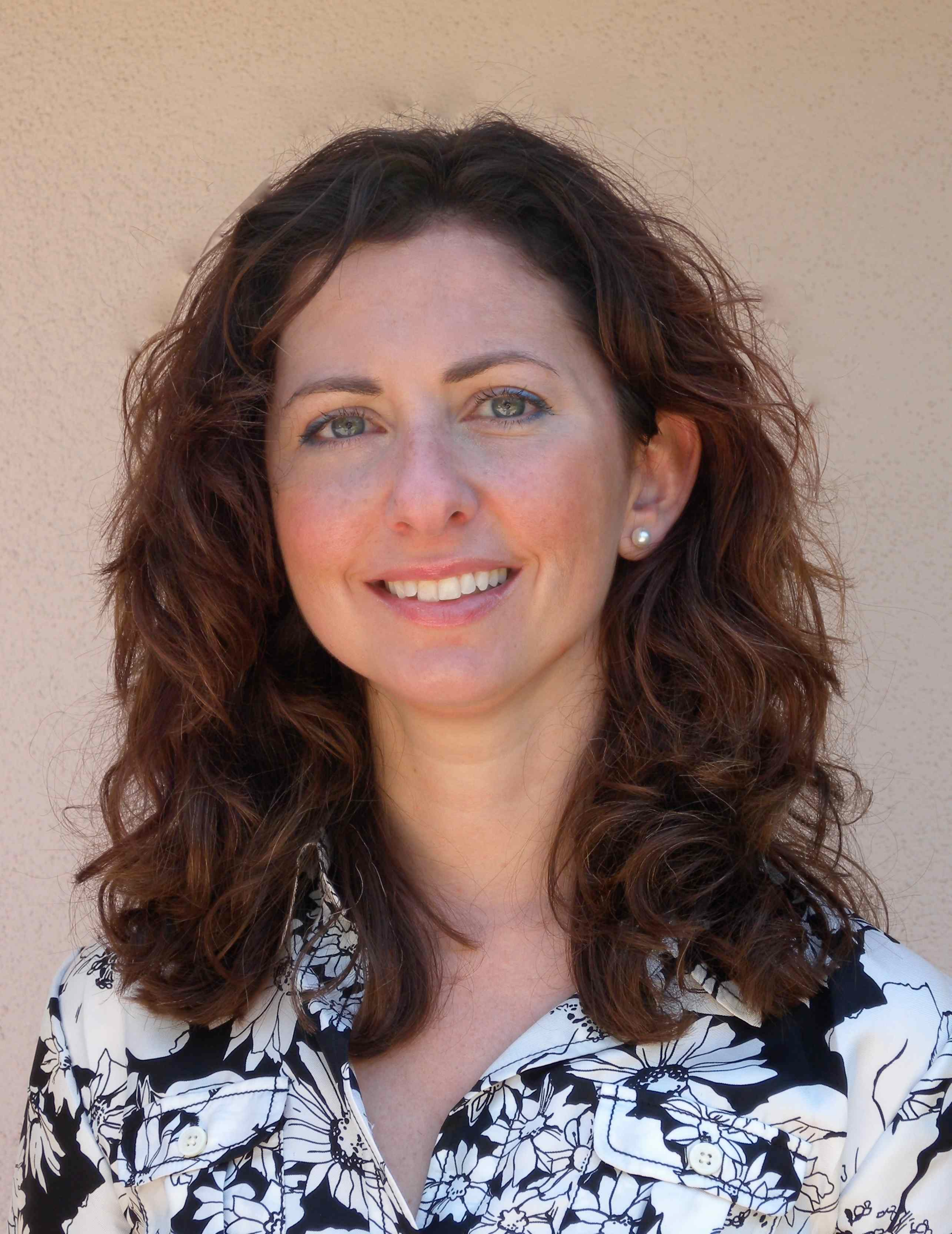}}]
{\bf Anna Scaglione} (IEEE F10') received the Laurea (M.Sc. degree) in 1995 and the Ph.D. degree in 1999 from the University of Rome, ``La Sapienza." She is currently Professor in Electrical and Computer Engineering at University of California at Davis, where she joined in 2008. She was previously at Cornell University, Ithaca, NY, from 2001 where became Associate Professor in 2006; prior to joining Cornell she was Assistant Professor in the year 2000-2001, at the University of New Mexico.

She served as Associate Editor for the \textsc{IEEE Transactions on Wireless Communications} from 2002 to 2005, and serves since 2008 the Editorial Board of the \textsc{IEEE Transactions on Signal Processing} from 2008 to 2011, where she is Area Editor. Dr. Scaglione received the {2000 IEEE Signal Processing Transactions Best Paper Award} the {NSF Career Award} in 2002 and she is co-recipient of the {Ellersick Best Paper Award} (MILCOM 2005) and the {2013 IEEE Donald G. Fink Prize Paper Award}. Her expertise is in the broad area of signal processing for communication systems and networks. Her current research focuses on communication and wireless networks, sensors' systems for monitoring, control and energy management  and network science.
\end{IEEEbiography}

\end{document}